\journal{ar\!Xiv.org for the record as a manuscript.
	}
\begin{document}

\newtheorem{theorem}{Theorem}[section]
\newtheorem{thm}[theorem]{Theorem}
\newtheorem{lem}[theorem]{Lemma}
\newtheorem{corollary}[theorem]{Corollary}
\newtheorem{cor}[theorem]{Corollary}
\newtheorem{df}[theorem]{Definition}
\newtheorem{ex}[theorem]{Example}
\newtheorem{mth}[theorem]{The Main Theorem}
\newtheorem{linthm}[theorem]{Linearity Theorem}
\newtheorem{defthm}[theorem]{Definability Theorem}
\newtheorem{mapth}[theorem]{The Mapping Theorem}
\newtheorem{mlm}[theorem]{The Main Lemma}
\newtheorem{remlm}[theorem]{The ${\tt RemoveMax} $ Invertibility Lemma}
\newtheorem{racrlm}[theorem]{The ${\tt RemoveAll} $ Cost Lemma}
\newtheorem{rmcrlm}[theorem]{The ${\tt RemoveMax} $ Cost Lemma}
\newtheorem{losslm}[theorem]{Credit Loss Characterization Lemma}
\newtheorem{pqlm}[theorem]{The pq Lemma}
\newtheorem{fixlm}[theorem]{Subheap Repair Lemma}
\newtheorem{unlm}[theorem]{The Uniqueness Lemma}
\newtheorem{complem}[theorem]{Competition Lemma}
\newtheorem{diaglem}[theorem]{Diagram Lemma}
\newtheorem{fundlem}[theorem]{Fundamental Lemma}
\newtheorem{charlem}[theorem]{Worst-case Heap Characterization Lemma}
\newtheorem{sLBlm}[theorem]{$ \sum \lambda $ Lower Bound Lemma}
\newtheorem{dth}[theorem]{Decomposition Theorem}
\newtheorem{sith}[theorem]{Singularity Theorem}
\newtheorem{1LBth}[theorem]{The 1$ ^{\mbox{st}} $ Lower Bound Theorem}
\newtheorem{2LBth}[theorem]{The Lower Bound Theorem}
\newtheorem{UBth}[theorem]{The Upper Bound Theorem}
\newtheorem{1oth}[theorem]{The 1$ ^{st} $ Optimality Theorem}
\newtheorem{2oth}[theorem]{The 2$ ^{nd} $ Optimality Theorem}
\newtheorem{3oth}[theorem]{The 3$ ^{rd} $ Optimality Theorem}
\newtheorem{cropt}[theorem]{Optimality Criterion}
\newtheorem{hyp}[theorem]{Hypothesis}
\newtheorem{example}[theorem]{Example}
\newtheorem{property}[theorem]{Property}
\newtheorem{note}[theorem]{Note}
\newtheorem{algMergeSort}[theorem]{Algorithm $ {\tt MergeSort} $}
\newtheorem{exercise}[theorem]{Exercise}

\newtheorem{theorem1}{Theorem}[subsection]
\newtheorem{thm1}[theorem1]{Theorem}
\newtheorem{lemma1}[theorem1]{Lemma}
\newtheorem{claim1}[theorem1]{Claim}
\newtheorem{corollary1}[theorem1]{Corollary}
\newtheorem{df1}[theorem1]{Definition}
\newtheorem{proposition1}[theorem1]{Proposition}
\newtheorem{problem1}[theorem1]{Problem}
\newtheorem{example1}{Example}[subsection]
\newtheorem{conjecture1}[theorem1]{Conjecture}
\newtheorem{remark1}[theorem1]{Remark}
\newtheorem{property1}[theorem1]{Property}
\newtheorem{algMergeSort1}[theorem1]{Algorithm $ {\tt MergeSort} $}
\newtheorem{linthm1}[theorem1]{Linearity Theorem}
\newtheorem{defthm1}[theorem1]{Definability Theorem}

\pagestyle{myheadings}
\markboth{M.A.Suchenek}{Suchenek:  Elementary Yet Precise Worst-case Analysis of MergeSort {\tt (SV)}}

\begin{frontmatter}

\title{Elementary Yet Precise Worst-case Analysis of MergeSort
\bigskip \\ {\tt \small A short version (SV) of a manuscript
intended for future publication}\tnoteref{label1} }
\tnotetext[label1]{\copyright 2017 Marek A. Suchenek.}
\author{MAREK A. SUCHENEK}

\address{California State University Dominguez Hills,
Department of Computer Science, \\
1000 E. Victoria St., Carson, CA 90747, USA,
  \texttt{Suchenek@csudh.edu}}

\begin{abstract}
The full version of this paper offers two elementary yet precise derivations of an exact formula 
 \[ W(n) = \sum_{i=1} ^{n} \lceil \lg i \rceil = n \lceil \lg n \rceil - 2^{\lceil \lg n \rceil} + 1 \]
for the maximum number $ W(n) $ of comparisons of keys performed by \linebreak $ {\tt MergeSort} $ on an $ n $-element array. The first of the two, due to its structural regularity, is well worth carefully studying in its own right. 

\smallskip

Close smooth bounds on $ W(n) $ are derived. It seems interesting that $ W(n) $ is linear between the points $ n = 2^{\lfloor \lg n \rfloor} $ and it linearly interpolates its own lower bound $ n \lg n - n + 1 $ between these points.

\smallskip

The manuscript (MS) of the full version of this paper, dated January 20, 2017, can be found at: \bigskip \\
\verb|http://csc.csudh.edu/suchenek/Papers/Analysis_of_MergeSort.pdf|

\bigskip

\end{abstract}

\begin{keyword}
 MergeSort \sep  sorting \sep worst case.
\smallskip
\\
\MSC[2010] 68W40    	Analysis of algorithms 
\smallskip
\\
\textbf{ACM Computing Classification}
\\
Theory of computation: Design and analysis of algorithms: Data structures design and analysis: Sorting and searching
\\
Mathematics of computing: Discrete mathematics: Graph theory: Trees
\\
Mathematics of computing: Continuous mathematics: Calculus
\\
ACM classes: 	F.2.2; G.2.0; G.2.1; G.2.2

\end{keyword}

\end{frontmatter}

\tableofcontents

 \newpage

\section{Introduction}

$ {\tt MergeSort} $ is one of the fundamental sorting algorithms that is being taught in undergraduate Computer Science curricula across the U.S. and elsewhere. Its worst-case performance, measured by the number of comparisons of keys performed while sorting them, is optimal for the class of algorithms that sort inductively\footnote{Inductive sorting of $ n $ keys sorts a set of $ n-1 $ of those keys first, and then ``sorts-in'' the remaining $ n $-th key.} by comparisons of keys.\footnote{In its standard form analyzed in this paper, $ {\tt MergeSort} $ is not an inductive sorting algorithm. However, its worst-case performance, measured by the number of comparisons of keys performed while sorting them, is equal to the worst-case performance of the \textit{binary insertion sort} first described by Steinhaus in \cite{stein:matsnap} that is worst-case optimal in the class of inductive sorting algorithms that sort by comparisons of keys; see \cite{knu:art} page 186.} Historically, it\footnote{A bottom-up version of it, invented by John Neumann.} was the first sorting algorithm to run in $ O(n \lg n) $ time\footnote{In the worst case.}.

\medskip

So it seems only fitting to provide an exact formula for $ {\tt MergeSort} $'s worst-case performance \textit{and} derive it precisely.
Unfortunately, many otherwise decent texts offer  unnecessarily imprecise\footnote{Notable exceptions in this category are \cite{baavan:ana} and \cite{sedfla:ana} that derive almost exact formulas, but see Section~\ref{sec:oth} page~\pageref{sec:oth} for a brief critique of the results and their proofs offered there.} variants of it, and some with quite convoluted, incomplete, or incorrect proofs. Due to these imperfections, the fact that the worst-case performance of $ {\tt MergeSort} $
is the same as that of another benchmark sorting algorithm, the  \textit{binary insertion sort}  of \cite{stein:matsnap}, has remained unnoticed\footnote{Even in \cite{knu:art}.}.

\medskip

In this paper, I present two outlines\footnote{The detailed derivations can be found in \cite{suc:worstmergeMS}.} of elementary yet precise and complete derivations of an exact formula
 \[ W(n) = \sum_{i=1} ^{n} \lceil \lg i \rceil = n \lceil \lg n \rceil - 2^{\lceil \lg n \rceil} + 1 \]
for the maximum number $ W(n) $ \footnote{Elementary derivation of an exact formula for the \textit{best}-case performance $ B(n) $ of $ {\tt MergeSort} $, measured by the number of comparisons of keys performed while sorting them, has been done in \cite{suc:bestmerg}; see Section~\ref{sec:best} page~\pageref{sec:best} of this paper.} of comparisons of keys performed by \linebreak $ {\tt MergeSort} $ on an $ n $-element array. The first of the two, due to its structural regularity, is well worth carefully studying in its own right. 

\smallskip

Unlike some other basic sorting algorithms\footnote{For instance, $ {\tt Heapsort} $; see \cite{suc:wcheap} for a complete analysis of its worst-case behavior.} that run in $ O(n \lg n) $ time, $ {\tt MergeSort} $ exhibits a remarkably regular\footnote{As revealed by Theorem~\ref{thm:mersorbounds}, page~\pageref{thm:mersorbounds}.} worst-case behavior, 
the elegant simplicity of which
has been mostly lost on its rough analyses. In particular,
$ W(n) $ is linear\footnote{See Figure \ref{fig:boundsMergeSort} page \pageref{fig:boundsMergeSort}.} between the points $ n = 2^{\lfloor \lg n \rfloor} $ and it linearly interpolates its own lower bound $ n \lg n - n + 1 $ \footnote{Given by the left-hand side of the inequality~\eqref{eq:mersorbounds} page~\pageref{eq:mersorbounds}.} between these points.

\bigskip

What follows is a short version (SV) of a manuscript dated January 20, 2017, of the full version version \cite{suc:worstmergeMS} of this paper that has been posted at: \bigskip \\
\verb|http://csc.csudh.edu/suchenek/Papers/Analysis_of_MergeSort.pdf| \bigskip \\
The derivation of the worst case of $ {\tt MergeSort} $ presented here is roughly the same\footnote{Except for the present proof of Lemma~\ref{thm:sumCeilLog} which I haven't been using in my class.} as the one I have been doing in my undergraduate Analysis of Algorithms class. \ref{sec:slides} shows sample class notes from one of my lectures.


\section{Some Math prerequisites}

A manuscript of the full version \cite{suc:worstmergeMS} of this paper contains a clever derivation of a well-known\footnote{See \cite{knu:art}.} closed-form formula for
$ \sum_{i=1} ^{n} \lceil \lg i \rceil $. It proves insightful in my worst-case analysis of $ {\tt MergeSort} $ as its right-hand side will occur on page~\pageref{eq:result} in the fundamental equality~\eqref{eq:result} and serve as an instrument to derive the respective exact formula for $ {\tt MergeSort} $'s worst-case behavior.

\begin{lem} \label{thm:sumCeilLog}
For every integer $ n\geq 1 $,
\begin{equation} \label{eq:sumCeilLog} 
\sum_{i=1} ^{n} \lceil \lg i \rceil = \sum _{y=0}^{\lceil \lg n \rceil-1} ( n - 2^y) .
\end{equation}
\end{lem}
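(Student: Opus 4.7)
The plan is to prove \eqref{eq:sumCeilLog} by a double-counting argument: view each $\lceil \lg i \rceil$ as counting the powers of two that lie strictly below $i$, then swap the order of summation.

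First I would establish the key combinatorial identity
\[
\lceil \lg i \rceil \;=\; |\{y \in \mathbb{Z}_{\geq 0} : 2^y < i\}|
\]
for every integer $i \geq 1$. This follows from the defining property that $\lceil \lg i \rceil$ is the least nonnegative integer $k$ with $i \leq 2^k$: the values $y$ satisfying $2^y < i$ are then exactly $0,1,\dots,k-1$, of which there are $k$. The degenerate case $i=1$ is consistent, since the set is empty and $\lceil \lg 1 \rceil = 0$.

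Next I would note that whenever $1 \leq i \leq n$, every qualifying $y$ satisfies $y \leq \lceil \lg n \rceil - 1$: indeed $2^y < i \leq n$ gives $y < \lg n$, so $y \leq \lceil \lg n \rceil - 1$. Plugging the identity above into the left-hand side of \eqref{eq:sumCeilLog} and swapping the two finite sums produces
\[
\sum_{i=1}^{n} \lceil \lg i \rceil
\;=\; \sum_{y=0}^{\lceil \lg n \rceil - 1}\; |\{i : 1 \leq i \leq n \text{ and } 2^y < i\}|
\;=\; \sum_{y=0}^{\lceil \lg n \rceil - 1} (n - 2^y),
\]
where the final equality uses $2^y \leq 2^{\lceil \lg n \rceil - 1} < n$ for every $y$ in the summation range, so that the inner count equals $n-2^y$ rather than merely $\max(0,n-2^y)$.

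The main obstacle I anticipate is pinning down the upper index $\lceil \lg n \rceil - 1$ of the outer sum precisely, as this requires checking that $2^{\lceil \lg n \rceil - 1} < n$ in both regimes, namely $n$ a power of $2$ (where $\lceil \lg n \rceil = \lg n$) and $n$ not a power of $2$ (where $\lceil \lg n \rceil = \lfloor \lg n \rfloor + 1$), together with the degenerate case $n = 1$ where both sides collapse to $0$ and the right-hand sum is empty. As a cross-check one could carry out a short induction on $n$, splitting on whether $n$ is a power of $2$ and noting that both sides increase by $\lceil \lg(n{+}1) \rceil$ when passing from $n$ to $n+1$; but the swap-of-summation argument above is cleaner and exposes the combinatorial meaning of the identity.
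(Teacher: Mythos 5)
Your double-counting argument is correct and complete: the identity $\lceil \lg i \rceil = |\{y \in \mathbb{Z}_{\geq 0} : 2^y < i\}|$ holds for all integers $i \geq 1$, the swap of summation is legitimate for finite index sets, and you correctly verify the one delicate point, namely that $2^{\lceil \lg n \rceil - 1} < n$ for $n \geq 2$ (so every $y$ in the range $0,\dots,\lceil \lg n \rceil - 1$ contributes exactly $n - 2^y$, with no truncation to $\max(0, n-2^y)$ needed), together with the empty-sum case $n=1$. The paper itself defers the proof of this lemma to the full manuscript, so no line-by-line comparison is possible here; your argument is, however, fully consistent with the paper's own use of the equivalence $n - 2^k > 0 \iff k \leq \lceil \lg n \rceil - 1$ in \eqref{eq:hight_rec_tree}, and it stands on its own as an elementary proof in the spirit the paper advertises.
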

\noindent \textit{Proof} in \cite{suc:worstmergeMS}.
\hfill $ \Box $

\medskip

From this one can easily conclude that:

\begin{cor} \label{cor:sumCeilLog}
For every integer $ n\geq 1 $,
\begin{equation} \label{eq:sumCeilLog2} 
\sum_{i=1} ^{n} \lceil \lg i \rceil = n \lceil \lg n \rceil - 2^{\lceil \lg n \rceil} + 1 .
\end{equation}
\end{cor}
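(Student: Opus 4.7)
The plan is to derive the closed form directly from the identity provided by Lemma~\ref{thm:sumCeilLog}, which already reduces the ceiling-log sum to a sum of affine terms in $2^y$. Since the right-hand side of \eqref{eq:sumCeilLog} is an elementary finite sum, no further combinatorial reasoning is required; the proof is purely an algebraic simplification.

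Concretely, I would proceed in three short steps. First, note that the summation index $y$ ranges over the $\lceil \lg n \rceil$ values $0, 1, \dots, \lceil \lg n \rceil - 1$, so splitting the summand gives
\[
\sum_{y=0}^{\lceil \lg n \rceil - 1}(n - 2^y) \;=\; n \cdot \lceil \lg n \rceil \;-\; \sum_{y=0}^{\lceil \lg n \rceil - 1} 2^y.
\]
Second, apply the standard finite geometric series identity $\sum_{y=0}^{k-1} 2^y = 2^k - 1$ with $k = \lceil \lg n \rceil$ to rewrite the last sum as $2^{\lceil \lg n \rceil} - 1$. Third, combine these to obtain $n \lceil \lg n \rceil - 2^{\lceil \lg n \rceil} + 1$, which, in view of Lemma~\ref{thm:sumCeilLog}, is exactly \eqref{eq:sumCeilLog2}.

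There is no real obstacle to overcome, only a boundary case worth a sentence of verification: when $n = 1$ one has $\lceil \lg 1 \rceil = 0$, so the summation range in Lemma~\ref{thm:sumCeilLog} is empty and both sides of \eqref{eq:sumCeilLog2} collapse to $0$, consistent with the empty sum $\sum_{i=1}^{1} \lceil \lg i \rceil = \lceil \lg 1 \rceil = 0$. For $n \geq 2$ the computation above applies verbatim, so the corollary follows as a direct arithmetic consequence of the lemma.
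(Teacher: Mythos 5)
Your proof is correct and follows exactly the route the paper intends: the corollary is obtained from Lemma~\ref{thm:sumCeilLog} by splitting the sum into $n\lceil \lg n\rceil$ minus the geometric series $\sum_{y=0}^{\lceil \lg n\rceil-1}2^y = 2^{\lceil \lg n\rceil}-1$. Your check of the $n=1$ boundary case (empty sum, both sides equal to $0$) is a sensible extra sentence but changes nothing.
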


\section{$ {\tt MergeSort} $ and its worst-case behavior $ W(n) $} \label{sec:mergsor}

A call to $ {\tt MergeSort} $ inherits an $ n $-element array $ {\tt A} $ of integers and sorts it non-decreasingly, following the steps described below.

\begin{algMergeSort} \label{alg:mersor}
To 
sort an $ n $-element array $ {\tt A} $ do:

\begin{enumerate}
\item \label{alg:mersor:item1} If $ n \leq 1 $ then return $ {\tt A} $ to the caller,
\item \label{alg:mersor:item2} If $ n \geq 2 $ then
\begin{enumerate}
\item \label{alg:mersor:item2:it1} pass the first $ \lfloor \frac{n}{2} \rfloor $ elements of $ {\tt A} $ to a recursive call to $ {\tt MergeSort} $,
\item pass the last $ \lceil \frac{n}{2} \rceil $ elements of $ {\tt A} $ to another recursive call to $ {\tt MergeSort} $,
\item \label{alg:mersor:item2:it2} linearly merge, by means of a call to $ {\tt Merge} $, the non-decreasingly sorted arrays that were returned from those calls onto one non-decreasingly sorted array  $ {\tt A}^{\prime} $,
\item \label{alg:mersor:item2:it3} return $ {\tt A}^{\prime} $ to the caller.
 \end{enumerate}  
\end{enumerate}
\end{algMergeSort}

A Java code of 
$ {\tt Merge} $ is shown on the Figure~\ref{fig:Merge}.\footnote{A Java code of 
$ {\tt MergeSort} $ is shown in \ref{sec:appCode} Figure~\ref{fig:Sort} page~\pageref{fig:Sort}.}

\begin{figure}[h]
\centering
\includegraphics[width=0.7\linewidth]{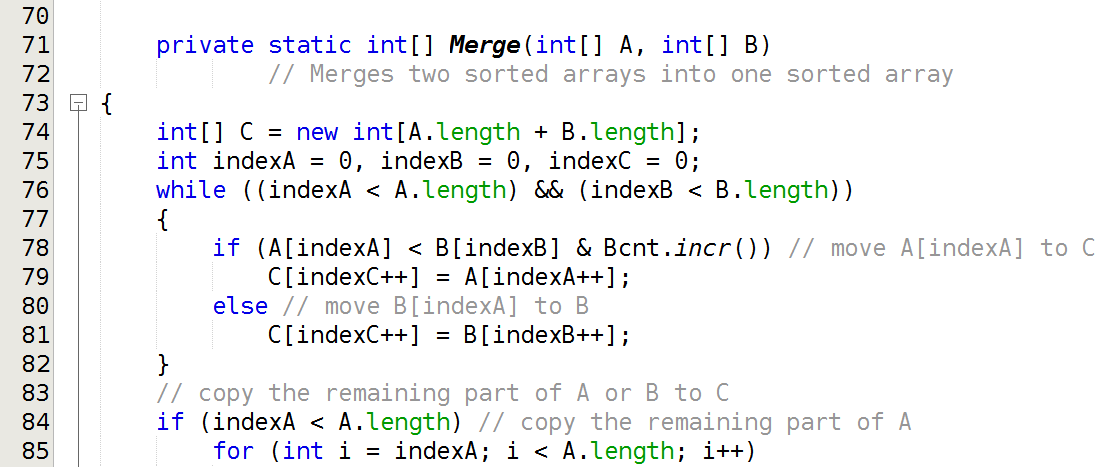} 
\includegraphics[width=0.7\linewidth]{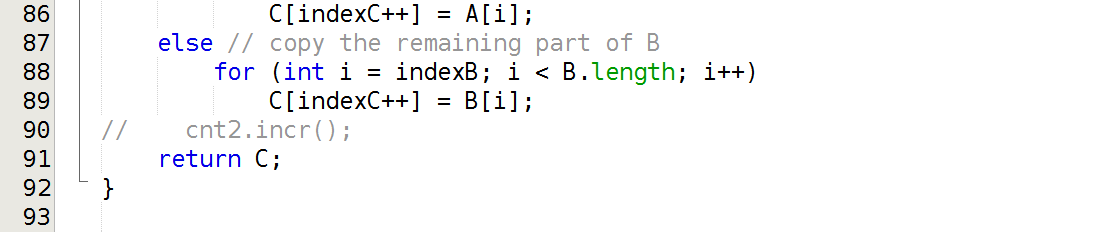}
\caption{A Java code of $ {\tt Merge} $, based on a pseudo-code from \cite{baa:ana}. Calls to $ {\tt Boolean} $ method $ {\tt Bcnt.incr()} $ count the number of comps for the purpose of experimental verification of the worst-case analysis of $ {\tt MergeSort} $.}
\label{fig:Merge}
\end{figure}

\medskip

A typical measure of the running time of $ {\tt MergeSort} $ is the number of \textit{comparisons of keys}, which for brevity I call \textit{comps}, that it performs while sorting array $ {\tt A} $. 

\begin{df}
The worst-case running time 
\[W(n)\]
of $ {\tt MergeSort} $ is defined as the maximum number of comps it performs while sorting an array of $ n $ distinct\footnote{This assumption is superfluous for the purpose of worst-case analysis as the mere presence of duplicates does not force $ {\tt MergeSort} $ to perform more comps.} elements. 
\end{df}

\label{pag:n>0}Clearly, if $ n=0 $ then $ W(n) = 0 $. From this point on, I am going to assume that $ n \geq 1 $.\footnote{This assumption turns out handy while using expression $ \lg n $.}

\medskip

Since no comps are performed outside $ {\tt Merge} $, $ W(n) $ can be computed as the sum of numbers of comps performed by all calls to $ {\tt Merge} $ during the execution of  $ {\tt MergeSort} $. The following classic results will be useful in my analysis.

\begin{theorem} \label{thm:merge_n-1}
The maximum number of comps performed by $ {\tt Merge} $ on two sorted list of total number $ n $ of elements is $ n-1 $.
\end{theorem}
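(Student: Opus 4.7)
The plan is to prove the theorem in two steps: first, establishing $n-1$ as an upper bound on the number of comparisons performed by ${\tt Merge}$, and second, exhibiting inputs on which this bound is attained.

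For the upper bound, I would exploit the following invariant of ${\tt Merge}$: every comparison of keys produces exactly one output element, namely the smaller of the two front elements being compared, which is then removed from its input list. The main loop of ${\tt Merge}$ terminates as soon as one of the two input lists becomes empty, at which point the remaining elements of the other list are flushed to the output without any further comparisons. Thus, writing $k$ for the total number of comparisons performed, exactly $k$ elements are written during the main loop and the remaining $n-k$ are written during the subsequent flushing phase. Since the loop exits the instant one list runs out, the other list must still contain at least one element at that moment, so $n-k \geq 1$, yielding $k \leq n-1$.

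For achievability, I would exhibit a concrete family of worst-case inputs and verify that the bound is attained. The simplest choice is a strictly interleaving pair of sorted lists, for example $1, 3, 5, \ldots$ and $2, 4, 6, \ldots$, arranged so that each comparison alternates which list contributes the winning element. Under such input, every key except the last (the overall maximum) is moved as a direct result of a comparison, so exactly $n-1$ comparisons take place. The main obstacle is essentially presentational rather than mathematical: one must be careful to specify precisely what a ``comparison'' counts as (the key-comparison test in the body of the ${\tt Merge}$ loop, not the loop-continuation test of list-exhaustedness) and to confirm by inspection of the code in Figure~\ref{fig:Merge} that ${\tt Merge}$ does terminate its main loop the moment one list becomes empty, leaving at least one tail element to be flushed without a further key comparison. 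Both points are immediate from the code and require no delicate reasoning.
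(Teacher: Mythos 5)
Your argument is correct and matches the approach the paper attributes to its own (deferred) proof: the upper bound follows because each comparison emits exactly one element and at least one element must remain to be flushed when the loop exits, and tightness is shown constructively by exhibiting interleaving worst-case inputs, just as the paper's proof in \cite{suc:worstmergeMS} is described as ``constructive, with Java code that generates worst cases.'' No gaps; the only caveat worth stating explicitly is that both lists must be nonempty for the bound to be attained, which holds for the splits $\lfloor n/2 \rfloor$ and $\lceil n/2 \rceil$ that ${\tt MergeSort}$ actually uses when $n \geq 2$.
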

\noindent \textit{Proof} (constructive, with Java code that generates worst cases shown in the \ref{sec:gen_worst_cases}) in \cite{suc:worstmergeMS}.
\hfill $ \Box $

\medskip

\label{pag:mergopt} Moreover, if the difference between the lengths of merged list is not larger than 1 then no algorithm that merges sorted lists by means of comps beats $ {\tt Merge} $ in the worst case, that is, has a lower than $ n-1 $ maximum number of comps.\footnote{Proof in \cite{knu:art}, Sec. 5.3.2 page 198; the worst-case optimality of $ {\tt Merge} $ ($ n-1 $ comps) was generalized in \cite{stoyao:optmerge} over lists of lengths $ k $ and $ m $, with $ k \leq m $, that satisfy $ 3k \geq 2m-2 $.}  This fact makes $ {\tt MergeSort} $ optimal in the intersection of the class of sorting algorithms that sort by merging two sorted lists of lengths' difference not larger than 1 \footnote{Or, by virtue of the above-quoted result from \cite{stoyao:optmerge}, with the difference not larger than the half of the length of the shorter list plus 1.}
with the class of sorting algorithms that sort by comps.

\section{An easy yet precise derivation of $ W(n) $} \label{sec:easy}

$ {\tt MergeSort} $  is a recursive algorithm. If $ n \geq 2 $ then it spurs a cascade of two or more recursive calls to itself.  A rudimentary analysis of the respective recursion tree $ T_n $, shown on Figure~\ref{fig:rectre}, yields a neat derivation of the exact formula for the maximum number $ W(n) $ of comps that $ {\tt MergeSort} $ performs on an $ n $-element array.

\begin{figure} [h]
\includegraphics[scale=.175]{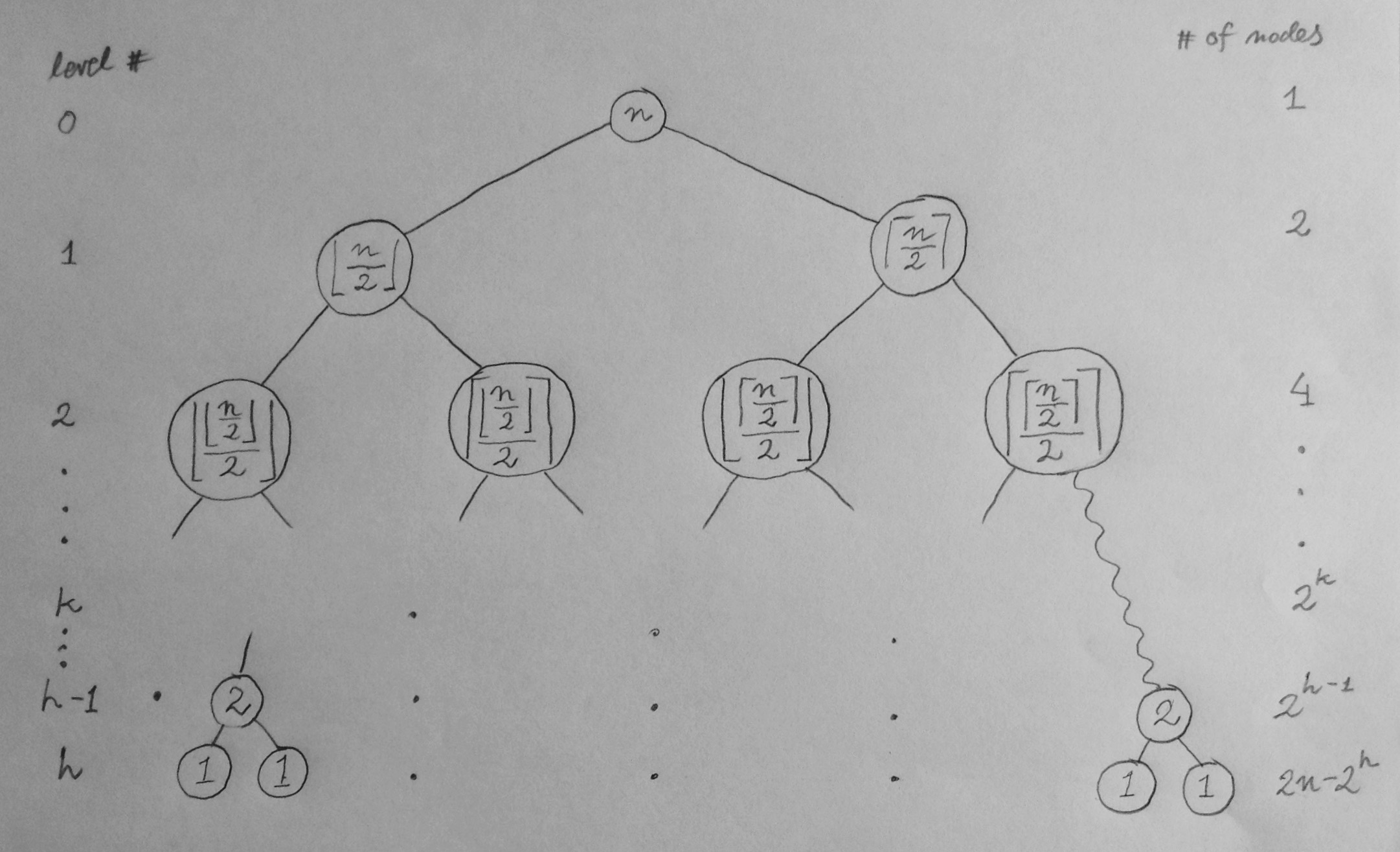} 
\caption{\label{fig:rectre} A sketch of the recursion 2-tree $ T_n $ for $ {\tt MergeSort} $ for a sufficiently large $ n $, with level numbers shown on the left and the numbers of nodes in the respective level shown on the right. The nodes correspond to calls to $ {\tt MergeSort} $ and show sizes of (sub)arrays passed to those calls. The last non-empty level is $ h $. The empty levels (all those numbered $ > h $) are not shown. The root corresponds to the original call to $ {\tt MergeSort} $. If a call that is represented by a node $ p $ executes further recursive calls to $ {\tt MergeSort} $ then these calls are represented by the children of $ p $; otherwise $ p $ is a leaf. The wavy line 
$ \photon $ 
represents a path in $ T_n $.}
\end{figure}

\medskip

The idea behind the derivation
is strikingly simple. It is based on the observation\footnote{Which I  prove in \cite{suc:worstmergeMS} as Theorem~4.6, page~14.} that for every  $ k \in \mathbb{N} $, the maximum number $ C_k $ of comps performed at each level\footnote{Empty or not.} $ k $  of $ T_n $ is given by this neat formula:\footnote{It is a simplification of formulas used in derivation presented in \cite{baavan:ana} and discussed in Section~\ref{sec:oth} page~\pageref{sec:oth}; in particular, it does not refer to the depth $ h $ of the decision tree $ T_n $.}
\begin{equation} \label{eq:main_level_comps_formula}
C_k = \max \{n - 2^k, 0\} .
\end{equation}
Since  
\begin{equation} \label{eq:hight_rec_tree}
n - 2^k > 0 \mbox{ if, and only if, }  \lceil \lg n \rceil - 1 \geq k ,
\end{equation}
the Corollary~\ref{cor:sumCeilLog} will allow me to conclude from \eqref{eq:main_level_comps_formula} and \eqref{eq:hight_rec_tree} the main result of this paper\footnote{This is how I have been deriving it in my undergraduate Analysis of Algorithms class for some 15 years or so, now.}:
\begin{equation} \label{eq:result}
W(n)  = \sum 
_{k \in \mathbb{N}}
 C_k = \sum _{k=0}^{\lceil \lg n \rceil-1} (n - 2^k) = n\lceil \lg n \rceil - 2^{\lceil \lg n \rceil} + 1 = \sum_{i=1} ^{n} \lceil \lg i \rceil.
\end{equation}

\bigskip 

The missing details\footnote{Which I did not show in my Analysis of Algorithms class.} in the above sketch are in \cite{suc:worstmergeMS}. Naturally, their only purpose is to prove the equality \eqref{eq:main_level_comps_formula} for all  $ k \in \mathbb{N} $,
as the rest, shown in \eqref{eq:result}, easily follows from it. 
In particular, we get:

\begin{mth} \label{thm:w-cMS}
The number $ W(n) $ of comparisons of keys that $ {\tt MergeSort} $ performs in the worst case while sorting an $ n $-element array is
\begin{equation} \label{eq:levComp2} 
W(n) = \sum_{i=1} ^{n} \lceil \lg i \rceil = n \lceil \lg n \rceil - 2^{\lceil \lg n \rceil} + 1 .
\end{equation}
\end{mth}
\noindent \textit{Proof} in \cite{suc:worstmergeMS}.
\hfill $ \Box $

\medskip

From that we can conclude a usual rough characterization of $ W(n) $: 
\[ W(n) \leq n (\lg n + 1) - 2^{ \lg n } + 1  =  n \lg n + n - n + 1 = n \lg n  + 1\]
and
\[ W(n) \geq n \lg n  - 2^{ \lg n + 1 } + 1  =  n \lg n - 2n + 1. \]
Therefore,
\[ W(n) \in \Theta (n \log n). \]

\medskip

The occurrence of $  \sum_{i=1} ^{n} \lceil \lg i \rceil $ in \eqref{eq:levComp2} allows to conclude that $ W(n) $ is exactly equal\footnote{\cite{knu:art} contains no mention of that fact.} to the number of comparisons of keys that the \textit{binary insertion sort}, considered by H. Steinhaus in \cite{stein:matsnap} and analyzed in \cite{knu:art}, performs in the worst case. Since the \textit{binary insertion sort} is known to be worst-case optimal\footnote{With respect to the number of comparisons of keys performed.} in the class of algorithms that perform incremental sorting, $ {\tt MergeSort} $ is worst-case optimal in that class\footnote{Although it is not a member of that class.}, too. From this and from the observation at the end of Section~\ref{sec:mergsor}, page~\pageref{pag:mergopt}, I conclude that no algorithm that sorts by merging two sorted lists and only by means of comps is worst-case optimal in the class of algorithms that sort by means of comps as it must perform 8 comps in the worst case while sorting 5 elements\footnote{They can be split in two: 1 plus 4, and follow the \textit{binary insertion sort}, or 2 plus 3, and follow $ {\tt MergeSort} $.}, while one can sort 5 elements by means of comps with no more than 7 comps.

\section{Close smooth bounds on $ W(n) $}

Our formula for $ W(n) $ contains a function ceiling that is harder to analyze than arithmetic functions and their inverses. In this Section, I outline a derivation of close lower and upper bounds on $ W(n) $ that are expressible by simple arithmetic formulas. I show that these bounds are the closest to $ W(n) $ in the class of functions of the form $  n \lg n + c n + 1 $, where $ c $ is a real constant.
\label{pag:epsilon} 
The detailed derivation and missing proofs can be found in \cite{suc:worstmergeMS}.

\medskip

Using the function $ \varepsilon $ (analyzed briefly in \cite{knu:art} and \cite{suc:wcheap}), a form of which is shown on Figure~\ref{fig:eps}, 
given by:
\begin{equation} \label{eq:eps}
\varepsilon  = 1 + \theta
- 2^{\theta} \mbox{ and } \theta =  \lceil \lg \, n \rceil -  \lg \, n ,
\end{equation}
one can conclude\footnote{\label{foo:proofdelta}See \cite{suc:wcheap}, Thm. 12.2 p. 94 for a proof.} that, for every $ n > 0 $,
\begin{equation} \label{eq:eps_basic_formula}
 n \lceil \lg n \rceil - 2^{\lceil \lg n \rceil} = n(\lg n + \varepsilon - 1) ,
\end{equation}
which yields
\begin{equation} \label{eq:W_eps}
 W(n) =  n( \lg n+ \varepsilon - 1) + 1 =  n \lg n + (\varepsilon - 1) n + 1 . 
\end{equation}
\begin{figure} [h]
\center
\includegraphics[scale=.34]{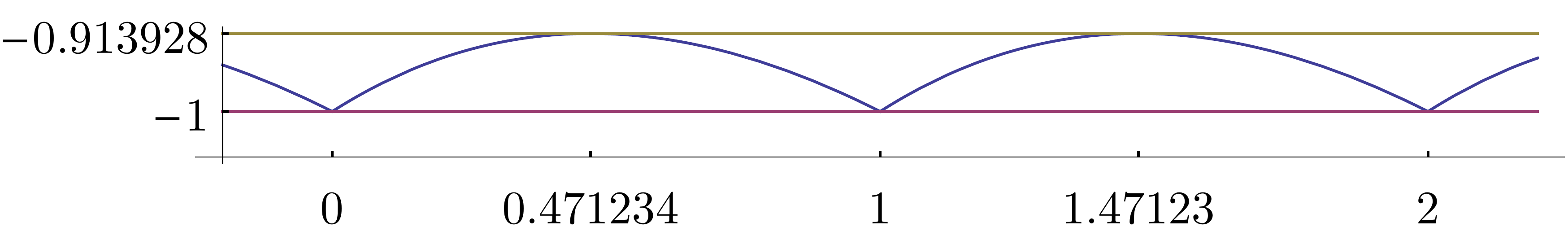}  
\caption{\label{fig:eps} Graph of $  \varepsilon - 1 $ as a function of $ \lg  n $.}
\end{figure}

\begin{property} \label{prop:delta}
Function $ \varepsilon $ given by \eqref{eq:eps} is a continuous function of $ n $ 
on the set of reals $ > 0 $. It assumes the minimum $ 0 $ for every  $ n = 2^{\lfloor \lg n \rfloor} $ and the maximum 
\begin{equation} \label{eq:Erdos}
\delta = 1 - \lg e + \lg \lg e \approx 0.0860713320559342 ,
\footnote{The constant $ 1 - \lg e + \lg \lg e $
has been known as the {\em Erd\"{o}s constant} $ \delta $. Erd\"{o}s used it around 1955 in order to establish an asymptotic upper bound for the number $ M(k) $ of different numbers in a multiplication table of size $ k \times k $ by means of the following limit:
\[\lim _{k \rightarrow \infty} \frac{\ln \frac{k \times k}{M(k)}}{\ln \ln (k \times k)} =  \delta .
\]}
\end{equation}
for every
\begin{equation} \label{eq:n_max_eps}
 n  =  2^{\lfloor \ln n + \lg\lg e \rfloor} \ln 2 
\end{equation}
and only such $ n $.
The function 
$ \varepsilon $ restricted to integers never reaches the value $\delta$. However,  $\delta$ is the  \textit{supremum} of $ \varepsilon $ restricted to integers. 
\end{property}
\noindent \textit{Proof} in \cite{suc:worstmergeMS}.
\hfill $ \Box $

Characterization \eqref{eq:W_eps} and Property~\ref{prop:delta} yield close smooth bounds of $ W(n) $. They are both of the form $  n \lg n + c n + 1 $ and they sandwich tightly $ W(n) $ between each other. If one sees $ W(n) $ as an infinite polygon\footnote{Which it is.}, its lower bound circumscribes it and its upper bound inscribes it.

\begin{theorem} \label{thm:mersorbounds} 
$ W(n) $ is a continuous concave function, linear between the points $ n =   2^{\lfloor \lg n  \rfloor}$, that for every $ n > 0 $ satisfies this inequality:
\begin{equation} \label{eq:mersorbounds} 
 n \lg n - n + 1 \leq W(n) \leq n \lg n - (1-\delta) n + 1 <  n \lg n - 0.913 n + 1 , 
\end{equation}
with the left $ \leq $ becoming $ = $ for every $ n = 2^{\lfloor \lg n \rfloor} $ and  the right $ \leq $ becoming $ = $ for every $ n = 2^{\lfloor \lg n + \lg \lg e \rfloor} \ln 2 $, and only for such $ n $. Moreover, the graph of $ W(n) $ is tangent to the graph of $ n \lg n - (1-\delta) n + 1 $ at the points $ n = 2^{\lfloor \lg n + \lg \lg e \rfloor} \ln 2 $, and only at such points.
\end{theorem}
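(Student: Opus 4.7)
The backbone of the argument will be equation \eqref{eq:W_eps}, namely $W(n) = n \lg n + (\varepsilon - 1) n + 1$, combined with Property~\ref{prop:delta}. Each claim of the theorem --- continuity, linearity between consecutive powers of $2$, the two smooth bounds with their equality cases, the strict numerical refinement, and the tangency assertion --- will follow either by substituting the extremal values of $\varepsilon$ or by analysing $W$ on a single linearity interval. No new analysis of the recursion tree will be needed; the work is entirely in unpacking what \eqref{eq:W_eps} and Property~\ref{prop:delta} already encode.

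I would first dispose of the bounds. Property~\ref{prop:delta} says $0 \leq \varepsilon \leq \delta$, with $\varepsilon = 0$ exactly at $n = 2^{\lfloor \lg n \rfloor}$ and $\varepsilon = \delta$ exactly at $n = 2^{\lfloor \lg n + \lg \lg e \rfloor} \ln 2$. Substituting these two extremal values into \eqref{eq:W_eps} immediately yields both sides of \eqref{eq:mersorbounds} together with the claimed equality cases. The strict inequality $n\lg n - (1-\delta) n + 1 < n \lg n - 0.913 n + 1$ then reduces to the numerical check $1 - \delta > 0.913$, which holds since $\delta < 0.087$.

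For the structural claims, continuity of $W$ as a function of the real variable $n > 0$ follows at once from continuity of $\varepsilon$ (Property~\ref{prop:delta}). For piecewise linearity I would switch to the Main Theorem form $W(n) = n \lceil \lg n \rceil - 2^{\lceil \lg n \rceil} + 1$: on each interval $(2^{k-1}, 2^k]$, $\lceil \lg n \rceil = k$ is constant, so $W(n) = k n - 2^k + 1$, an affine function of slope $k$; continuity at $n = 2^{k-1}$ is verified by matching the two adjacent affine pieces. The slopes $1, 2, 3, \ldots$ on successive intervals are strictly increasing with $k$, which is the monotonicity behaviour underlying the stated concavity-type shape of $W$.

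The main obstacle --- the only step requiring a nontrivial cancellation --- is the tangency claim at $n^{*} = 2^{\lfloor \lg n + \lg \lg e \rfloor} \ln 2$. Writing $U(n) = n \lg n - (1-\delta) n + 1$, I would compute $U'(n) = \lg n + \lg e - (1 - \delta)$ and show it equals the slope of the linear piece of $W$ through $n^{*}$. If $n^{*}$ lies in $(2^{k-1}, 2^k]$, Property~\ref{prop:delta} (together with $\ln 2 \in (\tfrac12, 1)$) forces $n^{*} = 2^k \ln 2$, whence $\lg n^{*} = k + \lg \ln 2 = k - \lg \lg e$. Substituting and using the defining identity $\delta = 1 - \lg e + \lg \lg e$, three of the terms cancel and leave $U'(n^{*}) = k$, matching the slope of $W$ on that interval; value agreement $U(n^{*}) = W(n^{*})$ is already the equality case of the upper bound. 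Uniqueness of the tangency point within each interval will follow from strict convexity of $U$, which allows at most one point of its graph to carry any prescribed slope.
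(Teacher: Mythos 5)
Your proposal is correct and follows essentially the same route the paper prescribes: the text immediately preceding the theorem states that the bounds follow from characterization \eqref{eq:W_eps} together with Property~\ref{prop:delta}, which is exactly your backbone, and your derivative computation $U'(n^{*})=k$ via the identity $\delta = 1 - \lg e + \lg\lg e$ is the natural way to supply the tangency claim that the paper defers to the full manuscript. One caveat: your (correct) observation that the slopes $k$ of the affine pieces are strictly \emph{increasing} shows that $W$ is convex in the modern sense, so it does not actually establish the word ``concave'' as literally stated in the theorem --- that discrepancy lies in the statement itself (which elsewhere describes the lower bound $n\lg n - n + 1$, a convex function, as being linearly interpolated by $W$ from above), and you should not paper over it with the phrase ``concavity-type shape'' but rather note explicitly that increasing slopes yield convexity.
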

\noindent \textit{Proof} in \cite{suc:worstmergeMS}.
\hfill $ \Box $

	\begin{figure} [h] \center
\includegraphics[scale=.34]{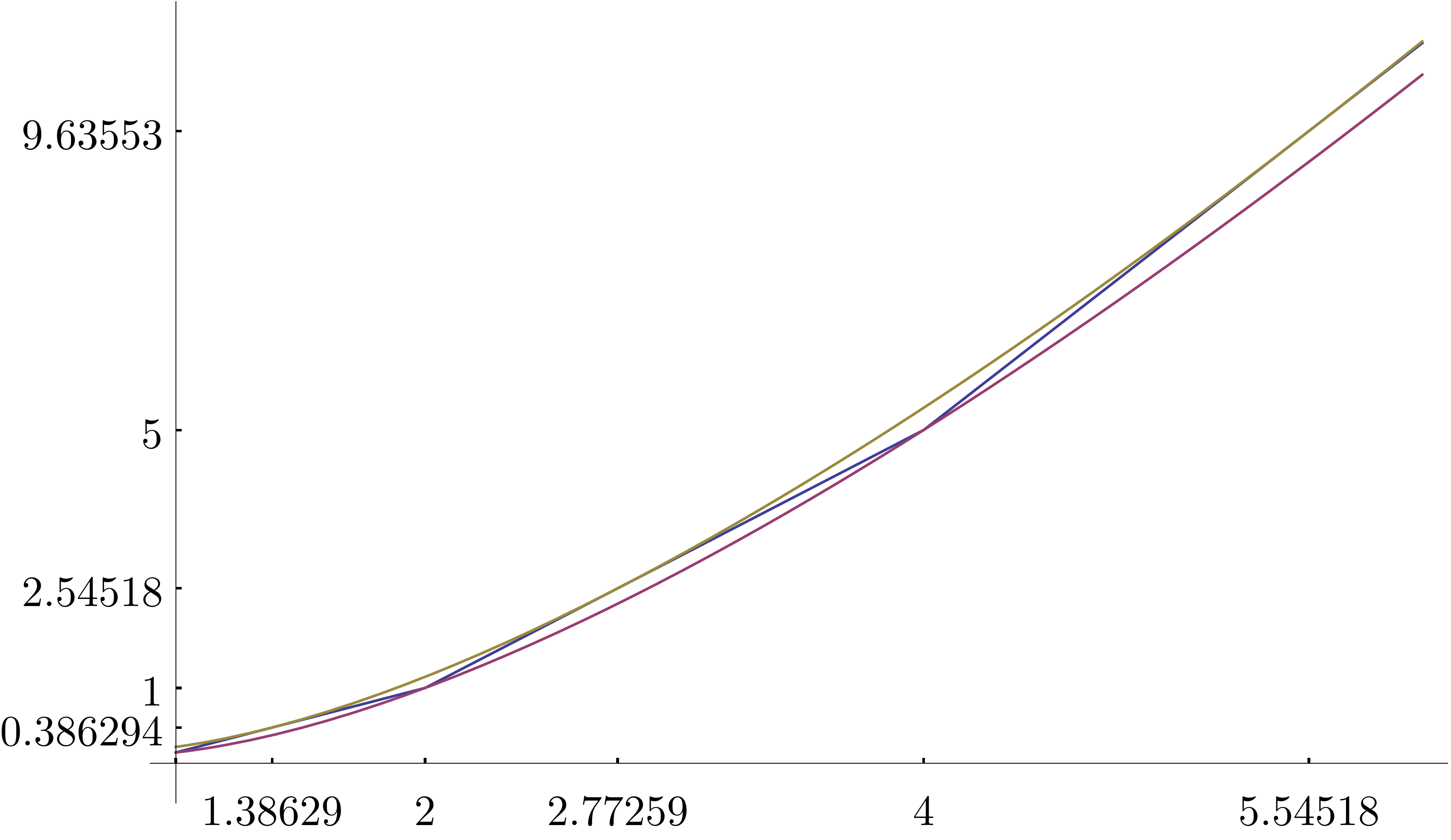}
		\caption{\label{fig:boundsMergeSort} $ W(n) = n\lceil \lg n \rceil - 2^{\lceil \lg n \rceil} +1 $ (the middle line) and its bounds $ n \lg n - n + 1 $ and $ n \lg n - (1-\delta) n + 1 $ $ \approx $ $ n \lg n - 0.913 n + 1 $, all three treated as functions of a positive real variable $ n $, plotted for $ n \in [1,6] $. $ W(n) $ is linear between the points $ n = 2^{\lfloor \lg n \rfloor} $ and it linearly interpolates its lower bound $ n \lg n - n + 1 $ between these points. Its upper bound $ n \lg n - (1-\delta) n + 1 $ inscribes it and is tangent to it at the points $  n $ $ = $  $ 2^{\lfloor \lg n + \lg\lg e \rfloor} \ln 2 $.}
	\end{figure}
	
	 \medskip

	 The bounds given by (\ref{eq:mersorbounds}) are really close\footnote{The distance between them is less than $  
	 \delta n \approx 0.0860713320559342 n $ for any positive integer $ n $.} to the exact value of $ W(n) $, as it is shown on Figure~\ref{fig:boundsMergeSort} page~\pageref{fig:boundsMergeSort}. The exact value $  n\lceil \lg n \rceil - 2^{\lceil \lg n \rceil} +1 $ is a continuous function (if $ n $ is interpreted as a real variable) despite that it incorporates discontinuous function \textit{ceiling}. 
	 
	\medskip
	
	\begin{note}
	\label{pag:interesting_interpolation}  It seems interesting that $ W(n) = n\lceil \lg n \rceil - 2^{\lceil \lg n \rceil} +1 $  (whether $ n $ is interpreted as a real variable or an integer variable) is linear between points $ n = 2^{\lfloor \lg n \rfloor} $ and linearly interpolates its own lower bound $ n \lg n - n + 1 $ between these points.
	\end{note}

	 \medskip
	
For $ n $ restricted to positive integers, the inequality \eqref{eq:mersorbounds} can be slightly enhanced by replacing the $ \leq $ symbol with $ < $, with the following result.
\begin{theorem} \label{thm:minconst}
$ 1-\delta $ is the greatest constant $ c $ such that for every integer $ n \geq 1 $, 
\begin{equation} \label{eq:minconst}
 W(n) < n \lg n - c n + 1.
\end{equation}
\end{theorem}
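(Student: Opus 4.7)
The plan is to reduce the inequality to a statement about the function $\varepsilon$ and then invoke Property~\ref{prop:delta}. From \eqref{eq:W_eps} I have the identity $W(n) = n \lg n + (\varepsilon - 1)n + 1$, so the inequality $W(n) < n \lg n - cn + 1$ is equivalent, after subtracting $n\lg n + 1$ and dividing by $n > 0$, to $\varepsilon(n) < 1 - c$. Hence the claim becomes: $1 - \delta$ is the greatest constant $c$ such that $\varepsilon(n) < 1 - c$ holds for every integer $n \geq 1$, which is the same as saying that $\delta$ is the smallest constant $d$ such that $\varepsilon(n) < d$ for every positive integer $n$. This is precisely the statement that $\delta$ is the supremum of $\varepsilon$ restricted to the positive integers and is never attained there.

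The proof then splits into two directions. \emph{Sufficiency of $c = 1-\delta$:} by Property~\ref{prop:delta}, $\varepsilon$ restricted to positive integers is bounded above by $\delta$ and never reaches $\delta$, so $\varepsilon(n) < \delta = 1 - (1-\delta)$ for every integer $n \geq 1$; translated back through \eqref{eq:W_eps} this gives $W(n) < n \lg n - (1-\delta)n + 1$, establishing \eqref{eq:minconst} with $c = 1-\delta$.

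\emph{Optimality of $c = 1-\delta$:} suppose $c > 1 - \delta$, so that $1 - c < \delta$. Since, by Property~\ref{prop:delta}, $\delta$ is the supremum of $\varepsilon$ over the positive integers, the definition of supremum supplies a positive integer $n_{0}$ with $\varepsilon(n_{0}) > 1 - c$. Multiplying by $n_{0}$ and adding $n_{0}\lg n_{0} - n_{0} + 1$, this produces
\begin{equation*}
W(n_{0}) = n_{0}\lg n_{0} + (\varepsilon(n_{0}) - 1)n_{0} + 1 > n_{0}\lg n_{0} - c n_{0} + 1,
\end{equation*}
which contradicts \eqref{eq:minconst} for that $n_{0}$. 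Hence no constant strictly larger than $1 - \delta$ can serve in \eqref{eq:minconst}, completing the proof.

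There is essentially no technical obstacle here beyond carefully invoking Property~\ref{prop:delta}; the whole argument is algebraic rearrangement of \eqref{eq:W_eps} plus the supremum characterization of $\delta$. The one place to be a little careful is making sure the divide-by-$n$ step is legitimate (it is, since $n \geq 1 > 0$) and that the strict inequality from \eqref{eq:minconst} corresponds exactly to $\varepsilon(n) < 1 - c$ rather than $\leq$, which it does.
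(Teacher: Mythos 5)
Your argument is correct: the reduction of \eqref{eq:minconst} to $\varepsilon(n) < 1-c$ via \eqref{eq:W_eps} is valid, and both directions then follow exactly from the two facts recorded in Property~\ref{prop:delta} (that $\delta$ is the supremum of $\varepsilon$ over the positive integers and is never attained there). The paper defers its own proof to the full manuscript, but it sets up precisely this machinery --- the identity $W(n) = n\lg n + (\varepsilon-1)n+1$ and the supremum characterization of $\delta$ --- immediately before the theorem, so your route is the intended one.
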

\noindent \textit{Proof} in \cite{suc:worstmergeMS}.
\hfill $ \Box $

\medskip

Theorem \ref{thm:minconst} can be reformulated as follows.

\begin{cor} \label{cor:minconstinf}
\begin{equation} \label{eq:minconstinf}
\inf \{c \in \mathbb{R} \mid \forall n \in \mathbb{N} \setminus \{0\} , W(n) < n \lg n - c n + 1 \}  = 1 - \delta .
\end{equation}
\end{cor}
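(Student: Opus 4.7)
The strategy is to reduce the corollary to a definitional restatement of Theorem~\ref{thm:minconst}, using Property~\ref{prop:delta} to locate the threshold value. Write $S$ for the set on the left of \eqref{eq:minconstinf}. Substituting the closed-form expression $W(n) = n \lg n + (\varepsilon - 1)n + 1$ from \eqref{eq:W_eps} into the inequality $W(n) < n \lg n - cn + 1$ and cancelling the $n \lg n$ and $+1$ terms, the defining condition collapses, for each fixed integer $n \geq 1$, to the equivalent inequality $c < 1 - \varepsilon(n)$. Hence $c \in S$ if and only if $c < 1 - \varepsilon(n)$ holds simultaneously for every positive integer $n$.

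Next I would invoke Property~\ref{prop:delta}. It tells us that $\varepsilon(n) \in [0, \delta)$ on the positive integers, with $\sup_{n \geq 1} \varepsilon(n) = \delta$ but this supremum not attained. Taking the infimum of $1 - \varepsilon(n)$ over $n \geq 1$ therefore yields $1 - \delta$, also unattained. Consequently, the requirement $c < 1 - \varepsilon(n)$ for every $n \geq 1$ is equivalent to $c \leq 1 - \delta$, so $S = (-\infty,\,1-\delta]$. Reading off the boundary value of this ray gives the right-hand side $1-\delta$ of \eqref{eq:minconstinf}.

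Essentially all the analytic content has been absorbed into Property~\ref{prop:delta}. The main obstacle, really already overcome in its proof, is the delicate fact that the supremum $\delta$ of $\varepsilon$ is never achieved on integers, which is precisely what allows the threshold value $c = 1-\delta$ itself to sit inside $S$ rather than only at its closure. Once that is in hand, the corollary amounts to a one-line repackaging of Theorem~\ref{thm:minconst} in inf/sup language, with the value on the left of \eqref{eq:minconstinf} interpreted as the boundary separating admissible from inadmissible constants.
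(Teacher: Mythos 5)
Your identification of the admissible set is correct and, in substance, re-proves Theorem~\ref{thm:minconst} from scratch: substituting \eqref{eq:W_eps} reduces the condition to $c < 1-\varepsilon(n)$ for all $n$, and Property~\ref{prop:delta} (in particular the non-attainment of the supremum $\delta$ on the integers) gives $S=(-\infty,\,1-\delta]$. The paper itself offers no independent argument here --- it presents the corollary as a verbatim reformulation of Theorem~\ref{thm:minconst} (``$1-\delta$ is the greatest constant $c$ such that \dots'' becomes an extremum of the set $S$) and defers details to the full manuscript --- so your route is, if anything, more self-contained, since you do not take Theorem~\ref{thm:minconst} as a black box.

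The one step that does not hold up is the last one: ``reading off the boundary value'' of the ray. Having established $S=(-\infty,\,1-\delta]$, you have shown that $\sup S=\max S=1-\delta$, but the left-hand side of \eqref{eq:minconstinf} as printed is $\inf S$, and the infimum of a left-infinite ray is $-\infty$, not its right endpoint. Since the set $\{c \mid \forall n,\ W(n)<n\lg n-cn+1\}$ is downward closed in $c$ (decreasing $c$ only weakens the inequality), no argument can make its infimum equal $1-\delta$; the operator in the statement must be read as $\sup$ (or $\max$), which is also what the paper's own gloss ``greatest constant'' demands, or else one must pass to the complementary set $\{c \mid \exists n,\ W(n)\geq n\lg n-cn+1\}=(1-\delta,\infty)$, whose infimum genuinely is $1-\delta$. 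You should say explicitly which reading you adopt; as written, the final sentence silently conflates the two endpoints of the ray and would certify a literally false identity.
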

\noindent \textit{Proof} in \cite{suc:worstmergeMS}.
\hfill $ \Box $

No upper bound of $ W(n) $ that has a form $ n \lg n - c n + 1 $ can coincide with $ W(n) $ at any integer $ n $, as the following fact ascertains.

\begin{cor} \label{cor:minconst_neq}
There is no constant $ c $ such that for every integer $ n \geq 1 $, 
\begin{equation} \label{eq:minconst_neq1}
 W(n) \leq n \lg n - c n + 1
\end{equation}
and for some integer $ n \geq 1 $, 
\begin{equation} \label{eq:minconst_neq2}
 W(n) = \lg n - c n + 1.
\end{equation}
\end{cor}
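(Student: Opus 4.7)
The plan is to argue by contradiction, reducing the claim to the non-attainment statement in Property~\ref{prop:delta}. Assume that some real constant $c$ satisfies both \eqref{eq:minconst_neq1} for every integer $n \geq 1$ and (reading \eqref{eq:minconst_neq2} with the evident $n$ restored) $W(n_0) = n_0 \lg n_0 - c n_0 + 1$ at some integer $n_0 \geq 1$. The goal is to show these two requirements are jointly impossible.

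First I would use the exact characterization \eqref{eq:W_eps}, that is $W(n) = n\lg n + (\varepsilon(n) - 1)n + 1$, evaluated at $n = n_0$. Equating it with $n_0\lg n_0 - c n_0 + 1$ cancels the $n_0\lg n_0$ and the $+1$ terms, dividing through by $n_0$ and solving gives the forced identity $c = 1 - \varepsilon(n_0)$. Thus the constant $c$ is uniquely pinned down by the point of coincidence.

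Next I would substitute this value of $c$ back into \eqref{eq:minconst_neq1} and apply \eqref{eq:W_eps} again, this time at an arbitrary integer $n \geq 1$. The inequality collapses (after canceling $n\lg n + 1$ and dividing by $n$) to $\varepsilon(n) - 1 \leq -(1 - \varepsilon(n_0))$, i.e., $\varepsilon(n) \leq \varepsilon(n_0)$ for every positive integer $n$. In other words, $\varepsilon(n_0)$ would be the maximum of $\varepsilon$ on the positive integers.

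The finish invokes Property~\ref{prop:delta}: the supremum of $\varepsilon$ on the positive integers equals the Erd\"{o}s constant $\delta$, and $\varepsilon$ restricted to integers never attains $\delta$. Hence $\varepsilon(n_0) < \delta$, and by the definition of supremum there must exist a positive integer $n'$ with $\varepsilon(n_0) < \varepsilon(n') < \delta$, contradicting the conclusion of the previous paragraph. The only mildly delicate step is connecting the single-point-equality hypothesis to the supremum; once one rewrites both requirements through $\varepsilon$ via \eqref{eq:W_eps}, the contradiction is essentially immediate and no further computation is needed.
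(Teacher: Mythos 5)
Your argument is correct: equality at $n_0$ forces $c = 1 - \varepsilon(n_0)$ via \eqref{eq:W_eps}, the universal inequality then forces $\varepsilon(n)\leq\varepsilon(n_0)$ for all positive integers $n$, and this contradicts Property~\ref{prop:delta}, which says $\varepsilon$ restricted to integers has supremum $\delta$ but never attains it (so no integer maximizes $\varepsilon$). The paper defers its own proof to the full manuscript, but your route uses exactly the apparatus the paper sets up around Theorem~\ref{thm:minconst} and Corollary~\ref{cor:minconstinf}, so this is essentially the intended argument.
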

\noindent \textit{Proof} in \cite{suc:worstmergeMS}.
\hfill $ \Box $

\medskip

In particular\footnote{Note the $ \leq $ symbol in \eqref{eq:minconstinf2}.},
\begin{equation} \label{eq:minconstinf2}
\inf \{c \in \mathbb{R} \mid \forall n \in \mathbb{N} \setminus \{0\}, W(n) \leq n \lg n - c n + 1 \}  = 1 - \delta .
\end{equation}

\medskip

Moreover, we can conclude from Theorem \ref{thm:minconst} the following fact.

\begin{cor} \label{cor:minconst}
$ 1-\delta $ is the greatest constant $ c $ such that for every integer $ n \geq 1 $, 
\begin{equation} \label{eq:cor:minconst}
 W(n) \leq \lceil n \lg n - c n \rceil.
\end{equation}
\end{cor}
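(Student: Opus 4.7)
The plan is to reduce this corollary directly to Theorem~\ref{thm:minconst}, exploiting the fact that $W(n)$ takes integer values. The key elementary observation is this equivalence: for any integer $k$ and any real $x$,
\[ k \leq \lceil x \rceil \quad \Longleftrightarrow \quad k < x + 1 . \]
Indeed, $\lceil x \rceil$ is the unique integer in the half-open interval $[x, x+1)$, so for an integer $k$ the inequality $k \leq \lceil x \rceil$ is equivalent to $k - 1 < x$, i.e.\ $k < x + 1$.

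Applying this equivalence with $k = W(n)$ and $x = n \lg n - cn$, I obtain that for every real constant $c$ and every integer $n \geq 1$,
\[ W(n) \leq \lceil n \lg n - cn \rceil \quad \Longleftrightarrow \quad W(n) < n \lg n - cn + 1 . \]
Therefore a real $c$ satisfies \eqref{eq:cor:minconst} for every integer $n \geq 1$ if and only if it satisfies \eqref{eq:minconst} for every integer $n \geq 1$; the two admissible families of constants coincide. Since by Theorem~\ref{thm:minconst} the greatest element of the latter family is $1 - \delta$, the greatest element of the former family is $1 - \delta$ as well, which is exactly the assertion of the corollary.

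The only point requiring care is verifying that the right-hand side of the equivalence is a \emph{strict} inequality rather than a weak one, but this is immediate from the universal strict bound $\lceil x \rceil < x + 1$ combined with the integrality of $W(n)$. No further obstacle arises, and the proof amounts to this single reformulation plus invocation of the preceding theorem.
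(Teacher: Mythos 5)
Your proof is correct: the equivalence $k \leq \lceil x \rceil \Leftrightarrow k < x+1$ for integer $k$ is valid, and combined with the integrality of $W(n)$ it reduces the corollary exactly to Theorem~\ref{thm:minconst}, which is precisely the route the paper indicates (it introduces the corollary with ``we can conclude from Theorem~\ref{thm:minconst}''), even though the detailed proof is deferred to the full manuscript. Nothing is missing.
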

\noindent \textit{Proof} in \cite{suc:worstmergeMS}.
\hfill $ \Box $

\medskip

Since for any integer $ n \geq 1 $, $ W(n) $ is integer, the lower bound given by \eqref{eq:mersorbounds}
yields
\begin{equation} \label{eq:mersorbounds2} 
 \lceil n \lg n \rceil - n + 1 \leq W(n) \leq \lceil n \lg n - 0.913 n \rceil . 
\end{equation}
By virtue of Corollary \ref{cor:minconst}, for some integers $ n \geq 1 $,\footnote{For instance, for $ n=11 $.}
\begin{equation} \label{eq:wrongbound}
W(n) > \lceil n \lg n - 0.914 n \rceil .
\end{equation}
Although the bounds given by \eqref{eq:mersorbounds2} \footnote{Almost the same bounds were given in \cite{baavan:ana}; see Section~\ref{sec:oth} for more details on this.} are tighter than those given by \eqref{eq:mersorbounds}, they nevertheless involve the discontinuous \textit{ceiling} function, so that they may not be as easy to visualize or analyze as some differentiable functions, thus losing their advantage over the precise formula $ W(n) = n\lceil \lg n \rceil - 2^{\lceil \lg n \rceil} +1 $. Therefore, the bounds given by \eqref{eq:mersorbounds} appear to have an analytic advantage over those given by \eqref{eq:mersorbounds2}.

\section{Other properties of the recursion tree $ T_n $}

This sections contains some well-known auxiliary facts that I didn't need for the derivation of the exact formula for $ W(n) $ but am going to derive from the Main Lemma 4.1 of \cite{suc:worstmergeMS} for the sake of a thoroughness of my analysis of the decision tree $ T_n $.

\begin{thm} \label{thm:depthRectree}
The depth $ h $ of the recursion tree $ T_n $ is 
\begin{equation} \label{eq:height}
h = \lceil \lg n \rceil .
\end{equation}
\end{thm}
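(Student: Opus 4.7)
The plan is to prove by strong induction on $n$ that the depth $d(n)$ of the recursion tree produced by $ {\tt MergeSort} $ on an $n$-element array equals $\lceil \lg n \rceil$; since $h = d(n)$ this yields \eqref{eq:height}. Note first that $d(n)$ is well-defined as a function of $n$ alone, because the split sizes $\lfloor n/2 \rfloor$ and $\lceil n/2 \rceil$ prescribed by Algorithm~\ref{alg:mersor} depend only on $n$. The base case $n = 1$ is immediate: Algorithm~\ref{alg:mersor} returns without recursing, so the tree consists of a single root and $d(1) = 0 = \lceil \lg 1 \rceil$.

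For the inductive step with $n \geq 2$, Algorithm~\ref{alg:mersor} spawns two recursive calls, on subarrays of sizes $\lfloor n/2 \rfloor$ and $\lceil n/2 \rceil$, both of which lie in $\{1, \ldots, n-1\}$. Hence
\[ d(n) \;=\; 1 + \max\{d(\lfloor n/2 \rfloor),\, d(\lceil n/2 \rceil)\}. \]
Applying the inductive hypothesis to both halves and using the monotonicity of $\lceil \lg \cdot \rceil$ together with $\lfloor n/2 \rfloor \leq \lceil n/2 \rceil$, the recurrence simplifies to $d(n) = 1 + \lceil \lg \lceil n/2 \rceil \rceil$. The theorem therefore reduces to the purely arithmetic identity
\[ 1 + \lceil \lg \lceil n/2 \rceil \rceil \;=\; \lceil \lg n \rceil \qquad \text{for every integer } n \geq 2. \]

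To settle that identity I would set $k = \lceil \lg n \rceil$, so that $k \geq 1$ and $2^{k-1} < n \leq 2^k$, and then bound $\lceil n/2 \rceil$ from both sides: the upper bound $\lceil n/2 \rceil \leq 2^{k-1}$ follows from $n \leq 2^k$ by halving, and the lower bound $\lceil n/2 \rceil > 2^{k-2}$ follows from $n \geq 2^{k-1} + 1$ upon halving and rounding up. Together these force $\lceil \lg \lceil n/2 \rceil \rceil = k - 1$, completing the induction. The only real obstacle is the careful bookkeeping of $\lfloor \cdot \rfloor$, $\lceil \cdot \rceil$, and $\lg$ in this identity, with special attention to the degenerate boundary $k = 1$ --- i.e., $n = 2$, for which $\lceil n/2 \rceil = 1$ and $\lceil \lg 1 \rceil = 0$ --- which is handled by direct inspection since the generic lower bound there reduces to the trivial $2^{-1} < 1$.
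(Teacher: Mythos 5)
Your induction is correct: the tree depth satisfies $d(1)=0$ and $d(n) = 1 + \max\{d(\lfloor n/2\rfloor),\, d(\lceil n/2\rceil)\}$ for $n \geq 2$, and your reduction to the identity $1 + \lceil\lg\lceil n/2\rceil\rceil = \lceil\lg n\rceil$, settled via the two-sided bound $2^{k-2} < \lceil n/2\rceil \leq 2^{k-1}$ with $k = \lceil\lg n\rceil$, goes through, including the $n=2$ boundary you single out. It is, however, a different route from the paper's. The paper does not prove the theorem in-line; it states that this fact is derived from the ``Main Lemma'' of the full manuscript, which is a level-by-level structural characterization of the recursion tree $T_n$ --- the same machinery that underlies the per-level comparison count $C_k = \max\{n-2^k,\,0\}$ and the equivalence $n - 2^k > 0$ iff $k \leq \lceil\lg n\rceil - 1$ in Section~\ref{sec:easy} --- so that the depth falls out as the index of the last non-empty level. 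Your argument buys self-containedness: it needs nothing beyond the split sizes prescribed by Algorithm~\ref{alg:mersor} and elementary ceiling arithmetic, and it proves exactly this one statement. The paper's approach reuses a lemma it needs anyway for the main result and yields strictly more information (the actual subarray sizes occurring at each level, the leaf counts of Corollary~\ref{cor:last2char}, etc.), of which the depth is only one corollary.
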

\noindent \textit{Proof} in \cite{suc:worstmergeMS}.
\hfill $ \Box $
\medskip

\begin{note}
Theorem~\ref{thm:depthRectree} allows for quick derivation of fairly close upper bound on the number of comps performed by $ {\tt MergeSort} $ on an $ n $-element array. Since at each level of $ T_n $ less than $ n $ comparisons are performed by $ {\tt Merge} $ and at level $ h $ no comps are performed, and there are $ h $ $ = $ $ \lceil \lg n \rceil  $ levels below level $ h $, the total number of comps is not larger than
\begin{equation} \label{eq:UBapprox} 
(n-1)h = (n-1)(\lceil \lg n \rceil) < (n-1) (\lg n + 1)   \in O(n \log n).
\end{equation}
\end{note}

 \medskip

\medskip

A \textit{cut} of a tree $ T_n $ is a set $ \Gamma $ of nodes of $ T $ such that every branch\footnote{A maximal path.} in $ T_n $ has exactly one element in $ \Gamma $.

\begin{thm} \label{thm:cut}
The sum of values shown at the elements of any cut of $ T_n $ is $ n $.
\end{thm}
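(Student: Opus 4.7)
The plan is to prove a slightly stronger statement by structural induction on the subtree: for any subtree $T'$ of $T_n$ whose root carries value $m$, any cut of $T'$ has values summing to $m$. Setting $T'=T_n$ and $m=n$ yields the theorem.

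The engine of the induction is the arithmetic identity $\lfloor m/2 \rfloor + \lceil m/2 \rceil = m$, which says exactly that the values at the two children of any internal node sum to the value at the parent — this is immediate from the splitting rule in Step~\ref{alg:mersor:item2:it1} of Algorithm~\ref{alg:mersor}. So first I would record this identity as the single arithmetic fact driving the proof.

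The induction proceeds on the size (equivalently, on $m$). For the base case, $m=1$ makes the root a leaf, there is a unique maximal path (the singleton root), the only cut is $\{\mathrm{root}\}$, and its value-sum is $1=m$. For the inductive step with $m\geq 2$, let $\Gamma$ be a cut of $T'$ and split on whether the root belongs to $\Gamma$. If $\mathrm{root}\in\Gamma$, then since the root lies on every maximal path and each path has exactly one node in $\Gamma$, we must have $\Gamma=\{\mathrm{root}\}$ and the sum is $m$. If $\mathrm{root}\notin\Gamma$, then every maximal path of $T'$ continues into either the left subtree $T'_L$ or the right subtree $T'_R$, and its unique intersection with $\Gamma$ lies on the corresponding side; hence $\Gamma_L := \Gamma\cap T'_L$ is a cut of $T'_L$ and $\Gamma_R := \Gamma\cap T'_R$ is a cut of $T'_R$, with $\Gamma=\Gamma_L\cup \Gamma_R$ disjointly. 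By the inductive hypothesis their value-sums are $\lfloor m/2\rfloor$ and $\lceil m/2\rceil$, which add to $m$ by the identity above.

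I do not foresee a real obstacle; the only point that needs a sentence of care is the ``if $\mathrm{root}\in\Gamma$ then $\Gamma=\{\mathrm{root}\}$'' step, which relies on the fact that every node of $T'$ lies on some maximal path (true because every node has at least one leaf descendant). Once that is noted, the decomposition of a cut into cuts of the two subtrees is forced by the definition of a cut, and the inductive step is a one-line addition.
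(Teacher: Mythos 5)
Your proof is correct: the decomposition of a cut according to whether it contains the root, combined with the identity $\lfloor m/2\rfloor+\lceil m/2\rceil=m$ coming from the splitting rule of Algorithm~\ref{alg:mersor}, gives a complete induction, and you rightly flag the one point needing care (that every node lies on some branch, so $\mathrm{root}\in\Gamma$ forces $\Gamma=\{\mathrm{root}\}$). The paper itself defers this proof to the full manuscript \cite{suc:worstmergeMS}, so a line-by-line comparison is not possible here, but your argument is the natural one for this statement and needs no repair.
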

\noindent \textit{Proof} in \cite{suc:worstmergeMS}.
\hfill $ \Box $

\begin{thm} \label{thm:sum_leaves}
The number of leaves in the recursion tree $ T_n $ is $ n $.
\end{thm}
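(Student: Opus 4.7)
The plan is to derive this directly from the Cut Theorem (Theorem~\ref{thm:cut}), applied to the particular cut $L$ consisting of all leaves of $T_n$.

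First I would verify that $L$ qualifies as a cut in the sense defined just before Theorem~\ref{thm:cut}: since every branch (maximal path) of the rooted tree $T_n$ originates at the root and terminates at a unique leaf, the set $L$ meets each branch in exactly one node.

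Next I would show that the size value attached to every leaf of $T_n$ is precisely $1$. By the standing assumption $n \geq 1$ (page~\pageref{pag:n>0}) the root carries a value $\geq 1$; the splitting rule in item~\ref{alg:mersor:item2} of Algorithm~\ref{alg:mersor} sends a value $m \geq 2$ to children of sizes $\lfloor m/2 \rfloor \geq 1$ and $\lceil m/2 \rceil \geq 1$, so a routine induction on the depth of a node yields that every node of $T_n$ carries a value $\geq 1$. Since, per item~\ref{alg:mersor:item1}, a node is a leaf exactly when its value is $\leq 1$, every leaf of $T_n$ is labelled with exactly $1$.

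Combining these two observations with Theorem~\ref{thm:cut} gives
\[
n \;=\; \sum_{p \in L} (\text{value at } p) \;=\; \sum_{p \in L} 1 \;=\; |L|,
\]
which is the desired conclusion. I anticipate no real obstacle beyond a careful verification of the leaf-value step, since the heavy lifting has already been packaged into Theorem~\ref{thm:cut}; the only edge case worth noting is $n=1$, where $T_n$ degenerates to a single root/leaf node carrying value $1$, in agreement with $|L|=1=n$.
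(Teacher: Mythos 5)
Your proposal is correct and follows exactly the route the paper sets up: it applies Theorem~\ref{thm:cut} to the cut consisting of all leaves, after checking that the leaves form a cut and that each leaf carries the value $1$ (value $\geq 1$ everywhere by induction on the splitting rule, value $\leq 1$ at a leaf by step~\ref{alg:mersor:item1}). The verification of the leaf-value step and the $n=1$ degenerate case are handled appropriately, so there is nothing to add.
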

\noindent \textit{Proof} in \cite{suc:worstmergeMS}.
\hfill $ \Box $

The following corollary provides some statistics about recursive calls to $ {\tt MergeSort} $.

\begin{cor} \label{note:calls_stats}
For every integer $ n > 0 $,
\begin{enumerate}
     \renewcommand\labelenumi{\theenumi}
     \renewcommand{\theenumi}{(\roman{enumi})}
\item \label{note:calls_stats:1} $ T_n $ has $ 2n - 1 $ nodes.

\item  \label{note:calls_stats:2} The number or recursive calls spurred by $ {\tt MergeSort} $ on any $ n $-element array is $ 2(n - 1) $.

\item  \label{note:calls_stats:3}
The sum $ S_n $ of all values shown in the recursion tree $ T_n $ on Figure~\ref{fig:rectre} is equal to:
\begin{equation} \label{eq:sum_nodes}
S_n =  n \lceil \lg n \rceil - 2^{\lceil \lg n \rceil} + 2n  = n(\lg n + \varepsilon + 1).
\end{equation}
\item \label{note:calls_stats:4}
The average size $ A_n $ of array passed to any recursive call to $ {\tt MergeSort} $ while sorting an $ n $-element array is:
\begin{equation} \label{eq:avg_nodes}
A_n =  \frac{1}{2}(1+\frac{1}{n-1}) (\lg n + \varepsilon) \approx \frac{1}{2} (\lg n + \varepsilon).
\end{equation}
\end{enumerate}
\end{cor}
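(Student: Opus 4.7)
The plan is to handle the four parts in order, with later parts building on earlier ones and on the Main Theorem~\ref{thm:w-cMS} together with identity~\eqref{eq:eps_basic_formula}.

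For part~(i), I would use the fact that, by the structure of Algorithm~\ref{alg:mersor}, every node of $ T_n $ is either a leaf (a call with subarray size $ \leq 1 $) or has exactly two children (one per recursive call), so $ T_n $ is a full binary tree. Combined with Theorem~\ref{thm:sum_leaves} (which gives $ n $ leaves) and the standard full-binary-tree identity $ \#\text{internals} = \#\text{leaves} - 1 $, this yields $ |T_n| = n + (n-1) = 2n-1 $. Part~(ii) then drops out, because every node of $ T_n $ other than the root is generated by exactly one recursive call, giving $ |T_n| - 1 = 2(n-1) $ recursive calls.

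For part~(iii), my plan is to bridge $ S_n $ and $ W(n) $ via a cost-decomposition identity. At every internal node $ v $ of $ T_n $, the associated $ {\tt Merge} $ call performs at most $ \mathrm{size}(v) - 1 $ comps by Theorem~\ref{thm:merge_n-1}, and the worst-case construction that underwrites the Main Theorem ensures these maxima can be simultaneously realized by a single input. Hence
\[ W(n) \;=\; \sum_{v\ \text{internal}} \bigl(\mathrm{size}(v) - 1\bigr) \;=\; \bigl(S_n - n\bigr) - (n-1) \;=\; S_n - 2n + 1 , \]
where I used part~(i) (there are exactly $ n-1 $ internal nodes) and the fact that each of the $ n $ leaves contributes size $ 1 $ to $ S_n $. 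Solving for $ S_n $ and substituting the Main Theorem gives $ S_n = n\lceil \lg n \rceil - 2^{\lceil \lg n \rceil} + 2n $, and invoking \eqref{eq:eps_basic_formula} rewrites this as $ n(\lg n + \varepsilon + 1) $. Part~(iv) follows by direct arithmetic: the sizes at the $ 2(n-1) $ recursive calls sum to $ S_n - n $ (everything but the root's own size), so $ A_n = (S_n - n)/(2(n-1)) $; substituting the computed $ S_n $ and factoring $ n/(2(n-1)) = \tfrac{1}{2}\bigl(1 + \tfrac{1}{n-1}\bigr) $ delivers the claimed expression.

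The main obstacle I anticipate is entirely inside part~(iii): the identity $ W(n) = \sum_v (\mathrm{size}(v) - 1) $ rests on the \emph{simultaneous} worst-case realizability of every individual $ {\tt Merge} $ call, not merely on summing the pointwise upper bounds from Theorem~\ref{thm:merge_n-1}. That simultaneous realizability is the nontrivial content of the full-paper worst-case input construction; aside from it, the proof uses only the already-stated Theorems~\ref{thm:merge_n-1}, \ref{thm:sum_leaves}, and~\ref{thm:w-cMS}, and identity~\eqref{eq:eps_basic_formula}.
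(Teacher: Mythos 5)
Your argument is correct, and parts~(i), (ii), and~(iv) are handled the natural way: $T_n$ is a full binary tree, so Theorem~\ref{thm:sum_leaves} gives $n+(n-1)=2n-1$ nodes; every non-root node is exactly one recursive call; and $A_n=(S_n-n)/(2(n-1))$. (Part~(iv) as stated silently excludes $n=1$, where $\tfrac{1}{n-1}$ is undefined --- a defect of the statement, not of your proof.) The paper defers its own proof to the full manuscript, so no line-by-line comparison is possible, but the apparatus it does display suggests a different route for part~(iii): since leaves occur only in the last two levels, each of the levels $0,\dots,h-1$ is a cut, so Theorem~\ref{thm:cut} contributes $n$ for each of the $h=\lceil\lg n\rceil$ full levels (Theorem~\ref{thm:depthRectree}), and Corollary~\ref{cor:last2char}\ref{cor:last2char:item3} contributes $2n-2^{h}$ for level $h$, giving $S_n=nh+2n-2^{h}$ directly, with no appeal to $W(n)$. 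You instead invert the identity $W(n)=\sum_{v\ \mathrm{internal}}(\mathrm{size}(v)-1)=S_n-2n+1$ and import the Main Theorem. Both routes are valid: the level-by-level summation is more self-contained, needing only the combinatorics of $T_n$, while yours gets part~(iii) as a one-line corollary of Theorem~\ref{thm:w-cMS} at the cost of inheriting the one genuinely nontrivial ingredient you correctly flag --- the simultaneous realizability of the per-node maxima of Theorem~\ref{thm:merge_n-1}, which is precisely the content of $W(n)=\sum_k C_k$ in \eqref{eq:result} and is likewise only proved in the full manuscript. Since you already cite Theorem~\ref{thm:w-cMS} as a black box, this is a legitimate dependency rather than a gap.
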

\noindent \textit{Proof} in \cite{suc:worstmergeMS}.
\hfill $ \Box $

\medskip

 Here is a very insightful property. It states that $ {\tt MergeSort} $ is splitting its input array fairly evenly\footnote{The sizes of the sub-arrays passed to recursive calls at any non-empty level $ k $ of the decision tree $ T_n $ above the last non-empty level $ h $ are the same as the sizes of the elements of the maximally even partition of an $ n $-element set onto $ 2^k $ subsets.} so that at any level of the recursive tree, the difference between the lengths of the longest sub-array and the shortest sub-array is $ \leq 1. $ This fact is the root cause of 
good worst-case performance
of $ {\tt MergeSort} $.

\begin{property} \label{pro:level} 
The difference between values shown by any two nodes in the same level of of the recursion tree $ T_n $ is $ \leq 1 $.
\end{property} 
\noindent \textit{Proof} in \cite{suc:worstmergeMS}.
\hfill $ \Box $

\label{pag:merg_opti_for_mersor}Property \ref{pro:level} has this important consequence that $ {\tt Merge} $ is, by virtue of the observation on page~\pageref{pag:mergopt} after the Theorem~\ref{thm:merge_n-1} page~\pageref{thm:merge_n-1}, worst-case comparison-optimal while merging any two sub-arrays of the same level of the recursion tree. Thus the worst-case of $ {\tt MergeSort} $ cannot be improved just by replacing $ {\tt Merge} $ with some tricky merging $ {\tt X} $ as long as $ {\tt X} $ merges by means of comparisons of keys.

\begin{cor} \label{cor:merg_opti_for_mersor}
Replacing  {\tt Merge} with any other method that merges sorted arrays by means of comps will not improve the worst-case performance of $ {\tt MergeSort} $ measured with the number of comps while sorting an array.
\end{cor}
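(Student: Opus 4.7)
The plan is to argue level by level in the recursion tree $ T_n $, combining Property \ref{pro:level} with the per-merge worst-case optimality of $ {\tt Merge} $ noted on page~\pageref{pag:mergopt}. Let $ {\tt X} $ be any algorithm that merges two sorted arrays by means of comps, and let $ W_{{\tt X}}(n) $ denote the worst-case number of comps performed by the variant of $ {\tt MergeSort} $ that calls $ {\tt X} $ in place of $ {\tt Merge} $.

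First I would note that by Property \ref{pro:level}, every merge performed by $ {\tt MergeSort} $ operates on two sorted sub-arrays whose lengths $ a,b $ satisfy $ |a-b|\leq 1 $. The observation on page~\pageref{pag:mergopt} then applies to every one of those merges, so on each such pair $ {\tt X} $'s worst-case comp count is at least $ m-1 $, where $ m=a+b $, matching the bound attained by $ {\tt Merge} $ in Theorem \ref{thm:merge_n-1}.

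Second I would construct, by induction on $ n $, an adversarial input $ A^{\star}_n $ on which $ {\tt X} $ realizes this per-merge worst case simultaneously at every internal node of $ T_n $. The inductive step uses the recursive structure of Algorithm \ref{alg:mersor}: having built $ A^{\star}_{\lfloor n/2 \rfloor} $ and $ A^{\star}_{\lceil n/2 \rceil} $, I would relabel their keys by an order-preserving injection so that the two sorted outputs, when fed to the root merge, form a worst-case pair for $ {\tt X} $ on sizes $ (\lfloor n/2\rfloor,\lceil n/2\rceil) $. Such a relabeling exists because a worst-case pair for $ {\tt X} $ is determined only by the relative order of its $ m $ keys, a freedom we retain.

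Summing over the levels $ 0\le k\le \lceil \lg n\rceil-1 $ of $ T_n $ then yields
\[
W_{{\tt X}}(n) \;\geq\; \sum_{k=0}^{\lceil \lg n\rceil-1}\max\{n-2^{k},0\} \;=\; W(n)
\]
by \eqref{eq:main_level_comps_formula} and \eqref{eq:result}, which is the required inequality. The main obstacle is the simultaneous realizability in the inductive construction: certifying that the worst-case pair demanded by $ {\tt X} $ at each internal node can actually be produced as the sorted output of its two children's recursive calls, rather than being precluded by the structure that those calls impose. The order-preserving relabeling is exactly what resolves this, and it is the only non-routine ingredient of the proof.
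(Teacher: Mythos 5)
Your argument is correct and rests on the same two pillars as the paper's: Property~\ref{pro:level} guarantees that every merge in $ T_n $ acts on two sub-arrays whose lengths differ by at most $ 1 $, and the observation following Theorem~\ref{thm:merge_n-1} (Knuth's lower bound for merging nearly equal lists, cited on page~\pageref{pag:mergopt}) shows that no comparison-based merge can beat $ m-1 $ comps on such a pair. The paper stops there --- its proof is a one-line appeal to that observation. You go one step further and make explicit the point the paper leaves implicit: that the per-node lower bounds can actually be \emph{summed}, i.e., that a single input realizes the worst case of $ {\tt X} $ at every internal node simultaneously. Your inductive order-preserving relabeling handles this correctly, since the behaviour of a comparison-based merge depends only on the interleaving pattern of its two sorted inputs, and any interleaving can be realized without disturbing the relative order of keys within either half; equivalently, the same relabeling gives $ W_{\tt X}(n) \geq W_{\tt X}(\lfloor n/2 \rfloor) + W_{\tt X}(\lceil n/2 \rceil) + n - 1 $, from which the claim follows by induction via the recurrence \eqref{eq:recmergesort1}. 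So your proof is a more complete version of the paper's argument rather than a different one, and the extra care is not wasted: without the realizability step, a sum of per-node worst-case lower bounds is not automatically a lower bound on the worst case of the whole algorithm.
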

\begin{proof}
Proof follows from the  above observation.
\end{proof}

\medskip

Since a parent must show a larger value than any of its children, the Property~\ref{pro:level} has also the following consequence.
\begin{cor} \label{cor:last2}
The leaves in the recursion tree $ T_n $ can only reside at the last two non-empty levels of $ T_n $. 
\end{cor}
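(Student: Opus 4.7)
The plan is to proceed by contradiction, leveraging Property~\ref{pro:level} together with the parent-dominance observation noted just before the corollary (a parent shows a strictly larger value than any child). Assume for contradiction that some leaf $p$ lies at level $k \leq h-2$ of $T_n$. I would first observe that every node of $T_n$ carries a value $\geq 1$: the root has value $n \geq 1$, and a call on an input of size $m \geq 2$ spawns children of sizes $\lfloor m/2 \rfloor \geq 1$ and $\lceil m/2 \rceil \geq 1$. Since, by Algorithm~\ref{alg:mersor}, leaves correspond exactly to calls with input size $\leq 1$, the value at $p$ must equal $1$.

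Next I would invoke Property~\ref{pro:level}: since level $k$ contains a node of value $1$, every node at level $k$ has value in $\{1,2\}$. Any non-leaf at level $k$ has value $\geq 2$ and hence value exactly $2$; its two children necessarily have values $\lfloor 2/2 \rfloor = \lceil 2/2 \rceil = 1$ and are therefore leaves. Consequently every node that appears at level $k+1$ is a leaf, so level $k+2$ is empty. But once a level of the recursion tree is empty, every deeper level is empty too, forcing $h \leq k+1$ and contradicting $k \leq h-2$.

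The conceptual work is slim; the only care needed is in pinning down two auxiliary facts: (i) that leaves carry value exactly $1$ rather than $0$, which follows from inspecting the halving formulas $\lfloor m/2 \rfloor, \lceil m/2 \rceil$ in Algorithm~\ref{alg:mersor} under the standing assumption $n \geq 1$, and (ii) that the non-empty levels of $T_n$ form a contiguous prefix $\{0,1,\ldots,h\}$, so that an empty level at depth $k+2 \leq h$ genuinely contradicts the non-emptiness of level $h$. I expect (ii) to be the only item requiring any thought; it follows at once from the fact that a non-leaf's children live at the next level, so empty levels cascade downward.
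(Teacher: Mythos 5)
Your argument is correct and takes essentially the same route the paper intends: the paper's proof is a one-line appeal to Property~\ref{pro:level} together with the parent-dominance observation, and you simply supply the omitted details (a leaf carries value $1$, so by Property~\ref{pro:level} every node at its level shows at most $2$, forcing all children of that level to be leaves and the level after next to be empty). No gaps.
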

\begin{proof}
Proof follows from the Property~\ref{pro:level} as the above observation indicates.
\end{proof}

As a result, one can conclude\footnote{Cf. \cite{knu:art}, Sec. 5.3.1 Ex. 20 page 195.} that the recursion tree $ T_n $ has the miminum \textit{internal} and \textit{external path length}s among all binary trees on $ 2n - 1 $ nodes.

\medskip

Since all nodes at the level $ h $ of the recursion tree $ T_n $ are leaves and show value $ 1 $, no node at level $ h-1 $ can show a value $ >2 $. Indeed, level $ h-1 $ may only contain leaves, that show value $ 1 $, and parents of nodes of level $ h $ that show value $ 1+1=2 $. This observation and the previous result allow for easy characterization of contents of the last two non-empty levels of tree $ T_n $.


\noindent \begin{cor} \label{cor:last2char}
For every $ n\geq 2  $: 

\begin{enumerate}
     \renewcommand\labelenumi{\theenumi}
     \renewcommand{\theenumi}{(\roman{enumi})}
\item \label{cor:last2char:item1} there are $ 2^h - n $ leaves, all showing value $ 1 $, 
at the level $ h-1 $,
\item \label{cor:last2char:item2} there are $  n - 2^{h-1} $ non-leaves, all showing value $ 2 $, 
at the level $ h-1 $,
and
\item \label{cor:last2char:item3} there are $  2n - 2^{h} $ \footnote{This value shows in the lower right corner of Figure~\ref{fig:rectre} page~\pageref{fig:rectre} of a sketch of the recursion tree $ T_n $; it was not need needed for the derivation of the main result \eqref{eq:levComp2} page~\pageref{eq:levComp2}, included for the sake of completeness only.} nodes, all leaves showing value $ 1 $, at the level $ h $
\end{enumerate}
\noindent of the recursion tree $ T_n $, where $ h $ is the depth\footnote{The level number of the last non-empty level of $ T_n $.} of $ T_n $.
\end{cor}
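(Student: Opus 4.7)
The plan is to reduce the claim to a small system of linear equations. Let $a$ be the number of leaves at level $h-1$, $b$ the number of non-leaves at level $h-1$, and $c$ the number of nodes at level $h$; by Corollary~\ref{cor:last2} every leaf of $T_n$ sits at one of these two levels, and (by definition of $h$ as the depth) every node at level $h$ is a leaf, so proving (i)--(iii) is equivalent to computing $a$, $b$, $c$.

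I would extract three equations. First, by Corollary~\ref{cor:last2} no leaf of $T_n$ lies above level $h-1$, so every node at any level $k\leq h-2$ is an internal node and therefore has two children; a straightforward induction on $k$ then yields that level $h-1$ contains exactly $2^{h-1}$ nodes, whence $a + b = 2^{h-1}$. Second, by Theorem~\ref{thm:sum_leaves} the total number of leaves of $T_n$ equals $n$, and since leaves live only at levels $h-1$ and $h$, this gives $a + c = n$. Third, the paragraph preceding the statement already observes that leaves at level $h-1$ show value $1$ while non-leaves at level $h-1$ show value $1+1=2$; since no leaf sits above level $h-1$, the set of nodes at level $h-1$ is a cut of $T_n$, so Theorem~\ref{thm:cut} yields $a\cdot 1 + b\cdot 2 = n$, i.e.\ $a + 2b = n$.

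Solving this system is routine. Subtracting $a+b=2^{h-1}$ from $a+2b=n$ gives $b = n - 2^{h-1}$, which is~(ii). Back-substitution gives $a = 2^{h-1} - b = 2^h - n$, which is~(i). Finally, every non-leaf at level $h-1$ has exactly two children, both of which lie at level $h$, so $c = 2b = 2n - 2^h$, which is~(iii).

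The only step that requires any real care is the joint justification that level $h-1$ contains $2^{h-1}$ nodes \emph{and} forms a cut; but both of these facts reduce immediately to Corollary~\ref{cor:last2} (no leaves above level $h-1$), so once that cleanup is done the rest of the argument is pure bookkeeping.
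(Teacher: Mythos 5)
Your proof is correct, and it follows exactly the route the paper itself indicates: the short version defers the proof to the full manuscript, but the paragraph preceding the corollary states that the characterization follows from Corollary~\ref{cor:last2} together with the observation that level-$h$ nodes are leaves of value $1$ and level-$(h-1)$ nodes are either leaves of value $1$ or parents of value $2$, which is precisely what you formalize via Theorem~\ref{thm:cut}, Theorem~\ref{thm:sum_leaves}, and the fact that every internal node has two children. Your linear system $a+b=2^{h-1}$, $a+c=n$, $a+2b=n$ is sound bookkeeping, and the one step you flag as needing care (level $h-1$ has $2^{h-1}$ nodes and is a cut) is indeed correctly discharged by Corollary~\ref{cor:last2}.
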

\noindent \textit{Proof} in \cite{suc:worstmergeMS}.
\hfill $ \Box $

\section{A derivation of $ W(n) $ without references to the recursion tree}

In order to formally prove Theorem~\ref{thm:w-cMS} without any reference to the recursion tree, I use here the well-known\footnote{For instance, derived in \cite{baa:ana} and \cite{baavan:ana}.} recurrence relation
\begin{equation} \label{eq:recmergesort1} 
W(n) = W(\lfloor \frac{n}{2} \rfloor) + W(\lceil \frac{n}{2} \rceil) + n - 1 \mbox{ if } n \geq 2
\end{equation}
\begin{equation} \label{eq:recmergesort2} 
W(1) = 0
\end{equation}
that easily follows from the description (Algorithm~\ref{alg:mersor} page~\pageref{alg:mersor}) of $ {\tt MergeSort} $, steps~\ref{alg:mersor:item2:it1}, \ref{alg:mersor:item2:it2} and Theorem~\ref{thm:merge_n-1}. I am going to prove, by direct inspection,  that the function $ W(n) $ defined by (\ref{eq:levComp2}) satisfies equations (\ref{eq:recmergesort1}) and (\ref{eq:recmergesort2}).

\medskip

The details of the proof are in \cite{suc:worstmergeMS}.

\section{Other work} \label{sec:oth}

Although some variants of parts of the formula \eqref{eq:levComp2} appear to have been known for quite some time now, even otherwise precise texts offer derivations that leave room for improvement. For instance, the recurrence relation for $ {\tt MergeSort} $ analyzed in \cite{sedfla:ana} asserts that the least number of comparisons of keys performed outside the recursive calls, if any, that suffice to sort an array of size $ n $ is $ n $ rather than $ n-1 $.  This seemingly inconsequential variation results in a solution $ W(n) =  \sum_{i=1} ^{n-1} (\lfloor \lg i \rfloor + 2)$ \footnote{I saw $ W(n) =  \sum_{i=1} ^{n-1} \lfloor \lg i \rfloor  $ on slides that accompany \cite{sedfla:ana}.} on page 2, Exercise 1.4, rather than the correct formula (\ref{eq:result})  $ W(n) =  \sum_{i=1} ^{n} \lceil \lg i \rceil $ derived in this paper. (Also, the relevant derivations presented in  \cite{sedfla:ana}, although quite clever, are not nearly as precise and elementary as those presented in this paper.) As a result, the fact that $ {\tt MergeSort} $ performs exactly the same number of comparisons of keys as does another classic, \textit{binary insertion sort}, considered by H. Steinhaus and analyzed in \cite{knu:art}, remains unnoticed.

\bigskip

\bigskip

Pages 176 -- 177 of \cite{baavan:ana} contain 
an early sketch of proof of 
\begin{equation} \label{eq:baavan_result}
W(n) = n h - 2^{h} + 1,
\end{equation}
where $ h $ is the depth of the recursion tree $ T_n $, with remarkably close\footnote{Although not 100 percent correct.} bounds given by~\eqref{eq:baavan_bound} page~\pageref{eq:baavan_bound}. It is similar\footnote{The idea behind the sketch of the derivation in \cite{baavan:ana} was based on an observation that \[W(n) = \sum _{i = 0} ^{h-2} (n - 2^i) + \frac{n - B}{2} , \] where $ B $ was the number of leaves at the level $ h-1 $ of the decision tree $ T_n $; it was sketchily derived from the recursion tree shown on Figure~\ref{fig:RecTreeBaaVan} and properties stated in the Corollary~\ref{cor:last2char} page~\pageref{cor:last2char} (with only a sketch of proof in \cite{baavan:ana})  not needed for the derivation presented in Section~\ref{sec:easy}.}  to a simpler 
derivation based on the equality \eqref{eq:main_level_comps_formula}, presented in this paper in Section~\ref{sec:easy} and outlined in~\eqref{eq:result} page~\pageref{eq:result} (except for the $ \sum_{i=1} ^{n} \lceil \lg i \rceil $ part), which it predates by several years. 

\medskip

\begin{figure}[h]
\centering
\includegraphics[width=1\linewidth]{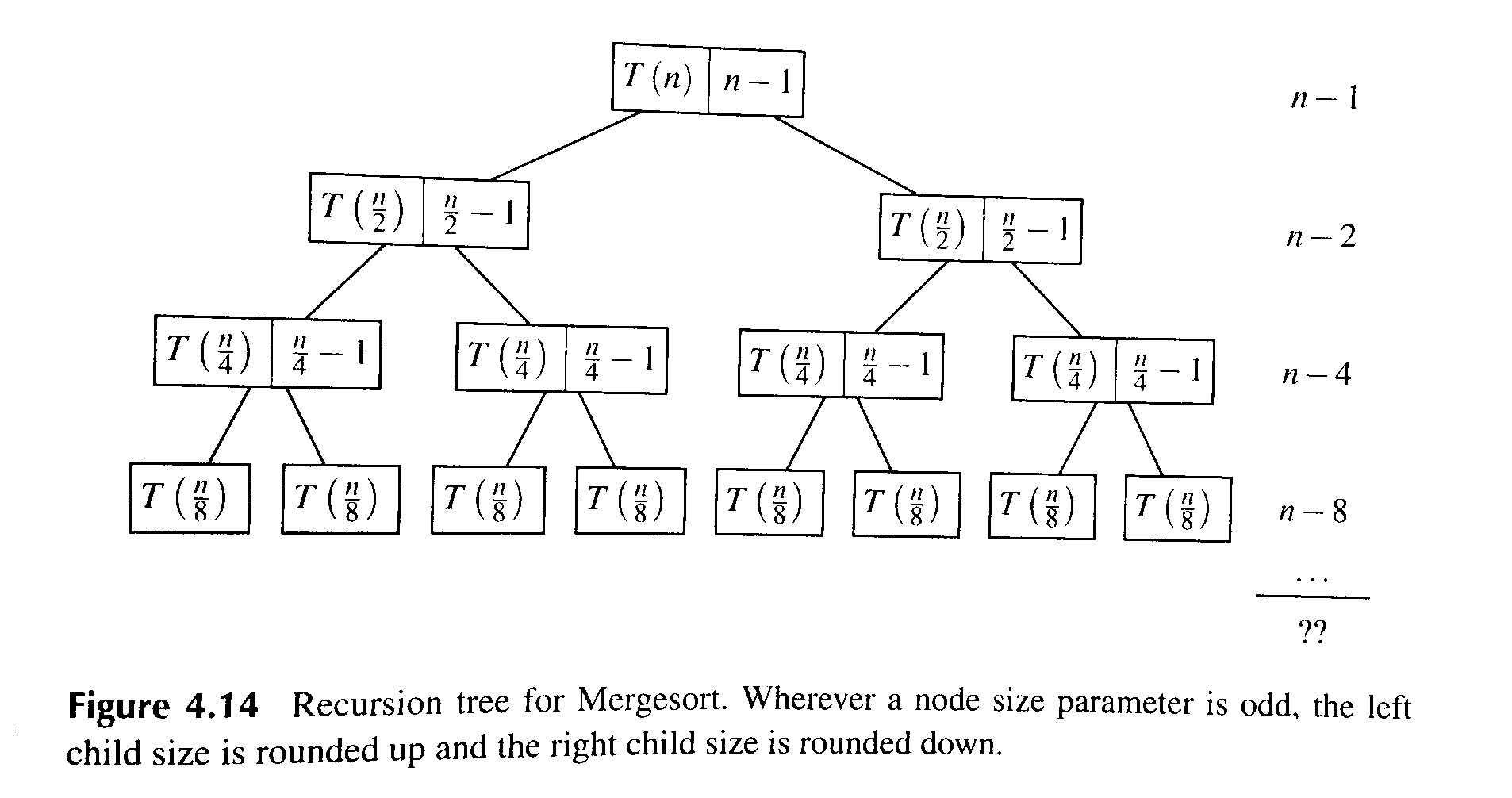}
\caption{A snapshot from \cite{baavan:ana}, page 177, showing a decision tree for $ {\tt MergeSort}$. \textit{\textbf{Note}}: This picture is copyrighted by \textit{Addison Wesley Longman} (2000). It was reproduced here from \cite{baavan:ana} for \textit{criticism} and \textit{comment} purposes only, and not for any other purpose, as prescribed by U.S. Code Tittle 17 Chapter 1 para 107 that established the ``fair use'' exception of copyrighted material. }
\label{fig:RecTreeBaaVan}
\end{figure}

The \cite{baavan:ana}'s version of the decision tree $ T_n $ (Figure 4.14 page 177 of \cite{baavan:ana}, shown here on Figure~\ref{fig:RecTreeBaaVan}) was a re-use of a decision tree for the special case of $ n = 2^{\lfloor \lg n \rfloor} $, with an ambiguous, if at all correct\footnote{It may be interpreted as to imply that for any level $ k $, all the left-child sizes at level $ k $ are the same and all the right-child sizes at level $ k $ are the same, neither of which is a valid statement.}, comment in the caption that ``[w]henever a node size parameter is odd, the left child size parameter is rounded up\footnote{Should be: \textit{down}, according to \eqref{eq:recmergesort1} page~\pageref{eq:recmergesort1}.} and the right child size is rounded down\footnote{Should be: \textit{up}, according to \eqref{eq:recmergesort1} page~\pageref{eq:recmergesort1}.}.'' The proof of the fact, needed for the derivation in \cite{baavan:ana}, that $ T_n $ had no leaves outside its last two levels (Corollary~\ref{cor:last2} page~\pageref{cor:last2}, not needed for the derivation presented in Section~\ref{sec:easy}) was waved with a claim ``[w]e can\footnote{This I do not doubt.} determine that [...]''

\medskip

Although $ h $ was claimed in \cite{baavan:ana} to be equal to $ \lceil \lg (n+1) \rceil $~\footnote{Which claim must have produced an incorrect formula $ n \lceil \lg (n+1) \rceil - 2^{\lceil \lg (n+1) \rceil} + 1 $ for $ W(n) $ and precluded concluding the neat characterization $ W(n) = \sum_{i=1} ^{n} \lceil \lg i \rceil $.} (and not to the correct $ \lceil \lg n \rceil $ given by the equality~\eqref{eq:height} page~\pageref{eq:height}, a fact not needed for the derivation presented in Section~\ref{sec:easy}), somehow the mostly correct conclusion\footnote{Almost identical with \eqref{eq:mersorbounds2} page~\pageref{eq:mersorbounds2}, except for the constant $ 0.914 $.}
was inferred from it, however, with no details offered - except for a mention that a function $ \alpha $ that satisfies $ h = \lg n + \lg \alpha $, similar to function $ \varepsilon $ shown on Figure~\ref{fig:eps} page~\pageref{fig:eps}, was used. It stated that (Theorem 4.6, page 177, in \cite{baavan:ana}):
\begin{equation} \label{eq:baavan_bound}
 \lceil n \lg n - n + 1 \rceil \leq W(n) \leq \lceil n \lg n - 0.914 n \rceil .
\end{equation}
		
It follows from \eqref{eq:wrongbound} page \pageref{eq:wrongbound} that the constant $ 0.914 $ that appears in \eqref{eq:baavan_bound} is incorrect. It was a rounding error\footnote{Of  $ 1 - \delta $, where $ \delta $ is given by \eqref{eq:Erdos} page~\pageref{eq:Erdos}.}, I suppose, that produced a false upper bound\footnote{For instance, if $ n = 11 $ then $ {\tt MergeSort} $ performs 29 comparisons of keys while the value of the upper bound $ \lceil n \lg n - .914 n \rceil $ given in \cite{baavan:ana}, Theorem 4.6. p. 177, is 28; this is a significant error as 28 or less comps while sorting any 11-element array beats the \textit{binary insertion sort} that requires $ \sum_{i=1} ^{11} \lceil \lg i \rceil = 29$ comps in the worst case.}.   

\section{Best-case analysis of $ {\tt MergeSort} $} \label{sec:best}

It turns out that derivation of minimum number $ B(n) $ of comps performed by $ {\tt MergeSort} $ on an $ n $-element array is a bit more tricky. A formula
\begin{equation} \label{eq:best}
 \frac{n}{2}(\lfloor \lg n \rfloor + 1) -  \sum _{k=0} ^{\lfloor \lg n \rfloor} 2^k  \mbox{\textit{Zigzag}}\,(\frac{n}{2^{k+1}}),
\end{equation}
 where
 \[ \mbox{\textit{Zigzag}} (x)  =  \min (x - \lfloor x \rfloor, \lceil x \rceil - x), \]
 has been derived and thoroughly analyzed in \cite{suc:bestmerg}. It has been also demonstrated in \cite{suc:bestmerg} that there is no closed-form formula for $ B(n) $.
 
 \medskip
 
 Incidentally, as it was pointed out in \cite{suc:bestmerg} $, B(n) $ is equal to the sum $ A(n,2) $ of bits in binary representations of all integers $ < n $.

\newpage

\appendix

\section{A Java code of $ {\tt MergeSort} $} \label{sec:appCode}

Figure~\ref{fig:Sort} shows a Java code of $ {\tt MergeSort} $.

\begin{figure}[h]
\centering
\includegraphics[width=.95\linewidth]{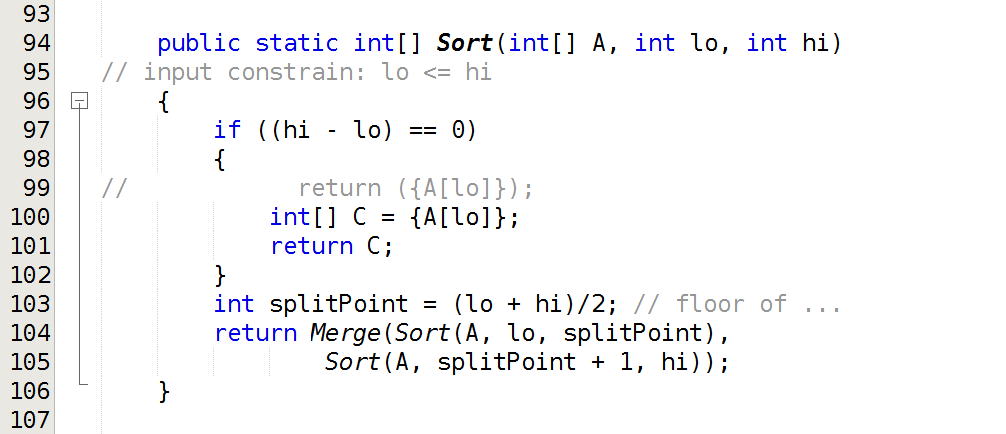}
\caption{A Java code of $ {\tt MergeSort} $. A code of $ {\tt Merge} $ is shown on Figure~\ref{fig:Merge}.}
\label{fig:Sort}
\end{figure}

\section{Generating worst-case arrays for $ {\tt MergeSort} $} \label{sec:gen_worst_cases}

Figure~\ref{fig:wostGenMerge} shows a self-explanatory Java code of recursive method $ \tt unSort $ that given a sorted array $ {\tt A} $ reshuffles it, in a way resembling $\tt InsertionSort $\footnote{Although not with $\tt InsertionSort $'s sluggishness; the number of moves of keys it performs is only slightly more than the \textit{minimum} number \eqref{eq:best} of comps performed by $ {\tt MergeSort} $ on any $ n $-element array.}, onto a worst-case array for $ {\tt MergeSort} $.

\begin{figure}[h]
\centering
\includegraphics[width=.9\linewidth]{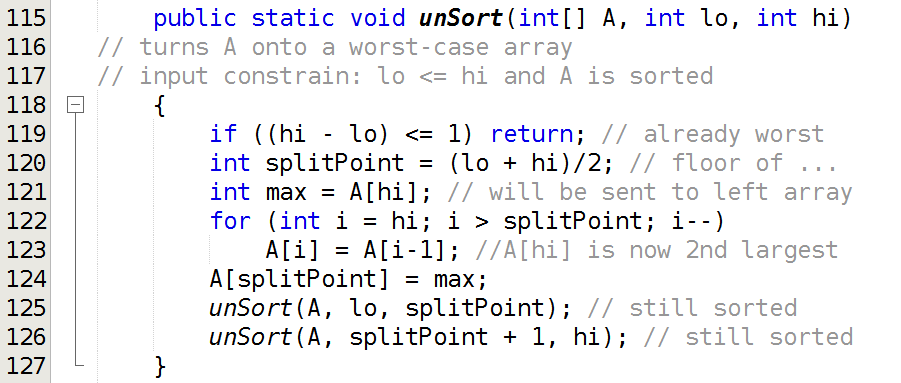}
\caption{A Java code of $ {\tt unSort} $ that, given a sorted array $ {\tt A} $, reshuffles it onto a worst-case array for $ {\tt MergeSort} $. Its structure mimics the Java code of $ {\tt MergeSort} $ shown on Figure~\ref{fig:Sort}.}
\label{fig:wostGenMerge}
\end{figure}

\medskip

For instance, it produced this array of integers between 1 and 500:

\medskip

\noindent 1, 500, 2, 3, 4, 7, 5, 6, 8, 15, 9, 10, 11, 14, 12, 13, 16, 31, 17, 18, 19, 22, 20, 21, 23, 30, 24, 25, 26, 29, 27, 28, 32, 62, 33, 34, 35, 38, 36, 37, 39, 46, 40, 41, 42, 45, 43, 44, 47, 61, 48, 49, 50, 53, 51, 52, 54, 60, 55, 56, 57, 59, 58, 63, 124, 64, 65, 66, 69, 67, 68, 70, 77, 71, 72, 73, 76, 74, 75, 78, 92, 79, 80, 81, 84, 82, 83, 85, 91, 86, 87, 88, 90, 89, 93, 123, 94, 95, 96, 99, 97, 98, 100, 107, 101, 102, 103, 106, 104, 105, 108, 122, 109, 110, 111, 114, 112, 113, 115, 121, 116, 117, 118, 120, 119, 125, 249, 126, 127, 128, 131, 129, 130, 132, 139, 133, 134, 135, 138, 136, 137, 140, 155, 141, 142, 143, 146, 144, 145, 147, 154, 148, 149, 150, 153, 151, 152, 156, 186, 157, 158, 159, 162, 160, 161, 163, 170, 164, 165, 166, 169, 167, 168, 171, 185, 172, 173, 174, 177, 175, 176, 178, 184, 179, 180, 181, 183, 182, 187, 248, 188, 189, 190, 193, 191, 192, 194, 201, 195, 196, 197, 200, 198, 199, 202, 216, 203, 204, 205, 208, 206, 207, 209, 215, 210, 211, 212, 214, 213, 217, 247, 218, 219, 220, 223, 221, 222, 224, 231, 225, 226, 227, 230, 228, 229, 232, 246, 233, 234, 235, 238, 236, 237, 239, 245, 240, 241, 242, 244, 243, 250, 499, 251, 252, 253, 256, 254, 255, 257, 264, 258, 259, 260, 263, 261, 262, 265, 280, 266, 267, 268, 271, 269, 270, 272, 279, 273, 274, 275, 278, 276, 277, 281, 311, 282, 283, 284, 287, 285, 286, 288, 295, 289, 290, 291, 294, 292, 293, 296, 310, 297, 298, 299, 302, 300, 301, 303, 309, 304, 305, 306, 308, 307, 312, 373, 313, 314, 315, 318, 316, 317, 319, 326, 320, 321, 322, 325, 323, 324, 327, 341, 328, 329, 330, 333, 331, 332, 334, 340, 335, 336, 337, 339, 338, 342, 372, 343, 344, 345, 348, 346, 347, 349, 356, 350, 351, 352, 355, 353, 354, 357, 371, 358, 359, 360, 363, 361, 362, 364, 370, 365, 366, 367, 369, 368, 374, 498, 375, 376, 377, 380, 378, 379, 381, 388, 382, 383, 384, 387, 385, 386, 389, 404, 390, 391, 392, 395, 393, 394, 396, 403, 397, 398, 399, 402, 400, 401, 405, 435, 406, 407, 408, 411, 409, 410, 412, 419, 413, 414, 415, 418, 416, 417, 420, 434, 421, 422, 423, 426, 424, 425, 427, 433, 428, 429, 430, 432, 431, 436, 497, 437, 438, 439, 442, 440, 441, 443, 450, 444, 445, 446, 449, 447, 448, 451, 465, 452, 453, 454, 457, 455, 456, 458, 464, 459, 460, 461, 463, 462, 466, 496, 467, 468, 469, 472, 470, 471, 473, 480, 474, 475, 476, 479, 477, 478, 481, 495, 482, 483, 484, 487, 485, 486, 488, 494, 489, 490, 491, 493, 492. 

\medskip

It took my $ {\tt MergeSort} $  3,989 comps to sort it. Of course,
\[500 \lceil \lg 500 \rceil - 2^{\lceil \lg 500 \rceil} + 1 = 4,500 - 512 + 1 = 3,989 .\]

\section{Notes from my Analysis of Algorithms lecture} \label{sec:slides}

Below are some of the class digital notes I wrote while lecturing Analysis of Algorithms in Spring 2012, with some comments added after class. Figure~4.14 (decision tree) is from the course textbook \cite{baavan:ana}, page 177, showing a decision tree for $ {\tt MergeSort}$. \textit{\textbf{Note}}: This figure is copyrighted by \textit{Addison Wesley Longman} (2000). I used it \textit{transformatively} in my class for nonprofit \textit{education}, \textit{criticism}, and \textit{comment}  purposes only, and not for any other purpose, as prescribed by U.S. Code Tittle 17 Chapter 1 para 107 that established the ``fair use'' exception of copyrighted material.

\bigskip

\includegraphics[width=0.9\linewidth]{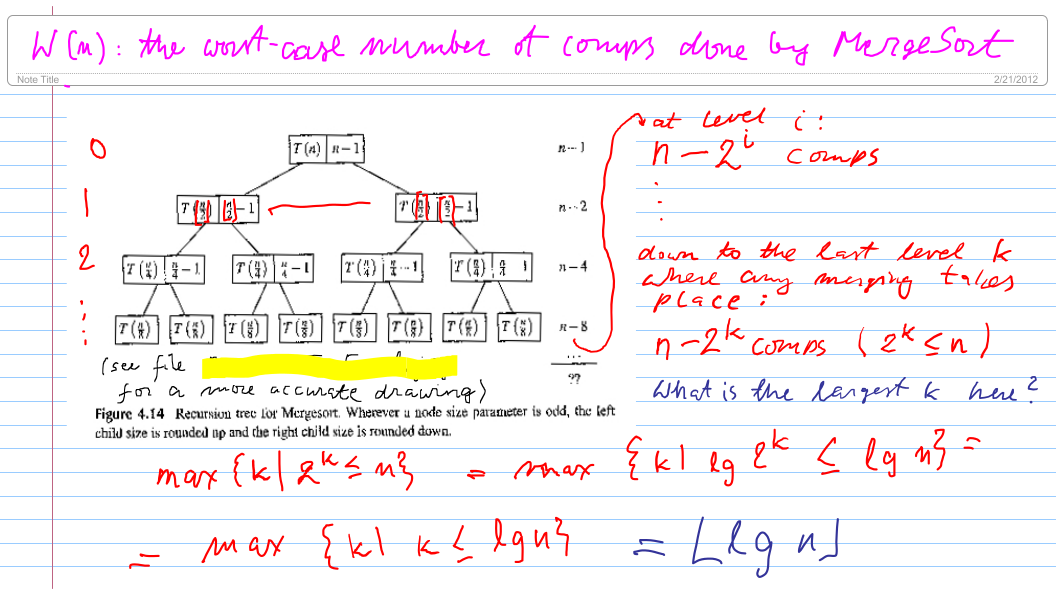} 

\includegraphics[width=0.9\linewidth]{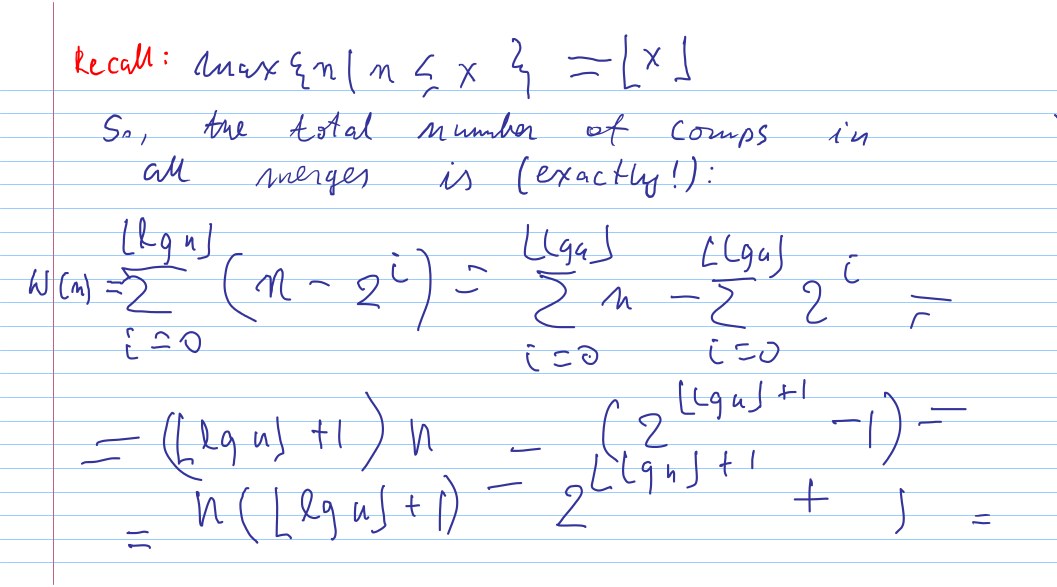} 

\includegraphics[width=0.9\linewidth]{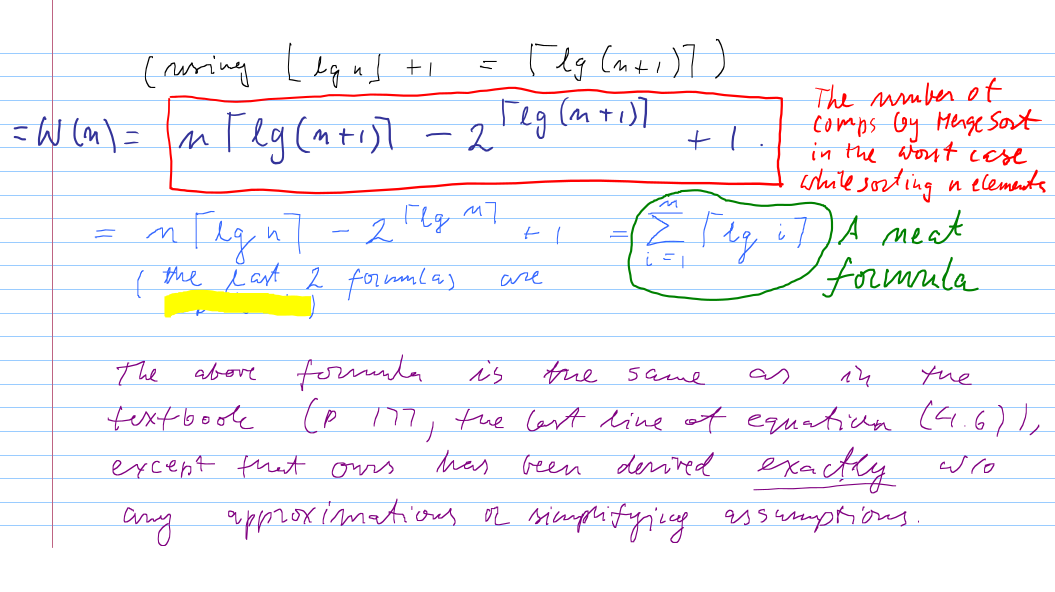}

\includegraphics[width=0.9\linewidth]{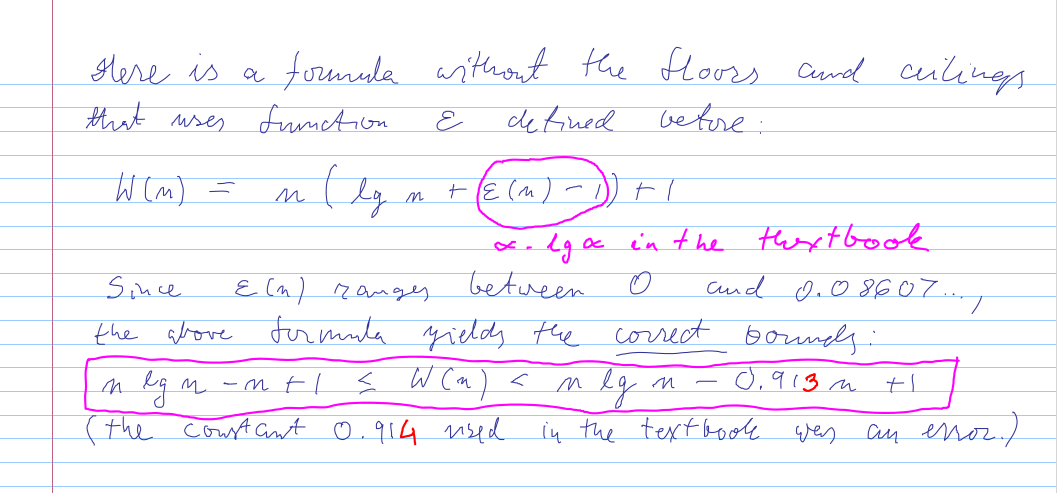}

\bigskip

Below is the improved recursion tree (of the Figure~4.14 page page 177 of \cite{baavan:ana}) that I used in class in Spring 2012.
 \\
\includegraphics[width=1.1\linewidth]{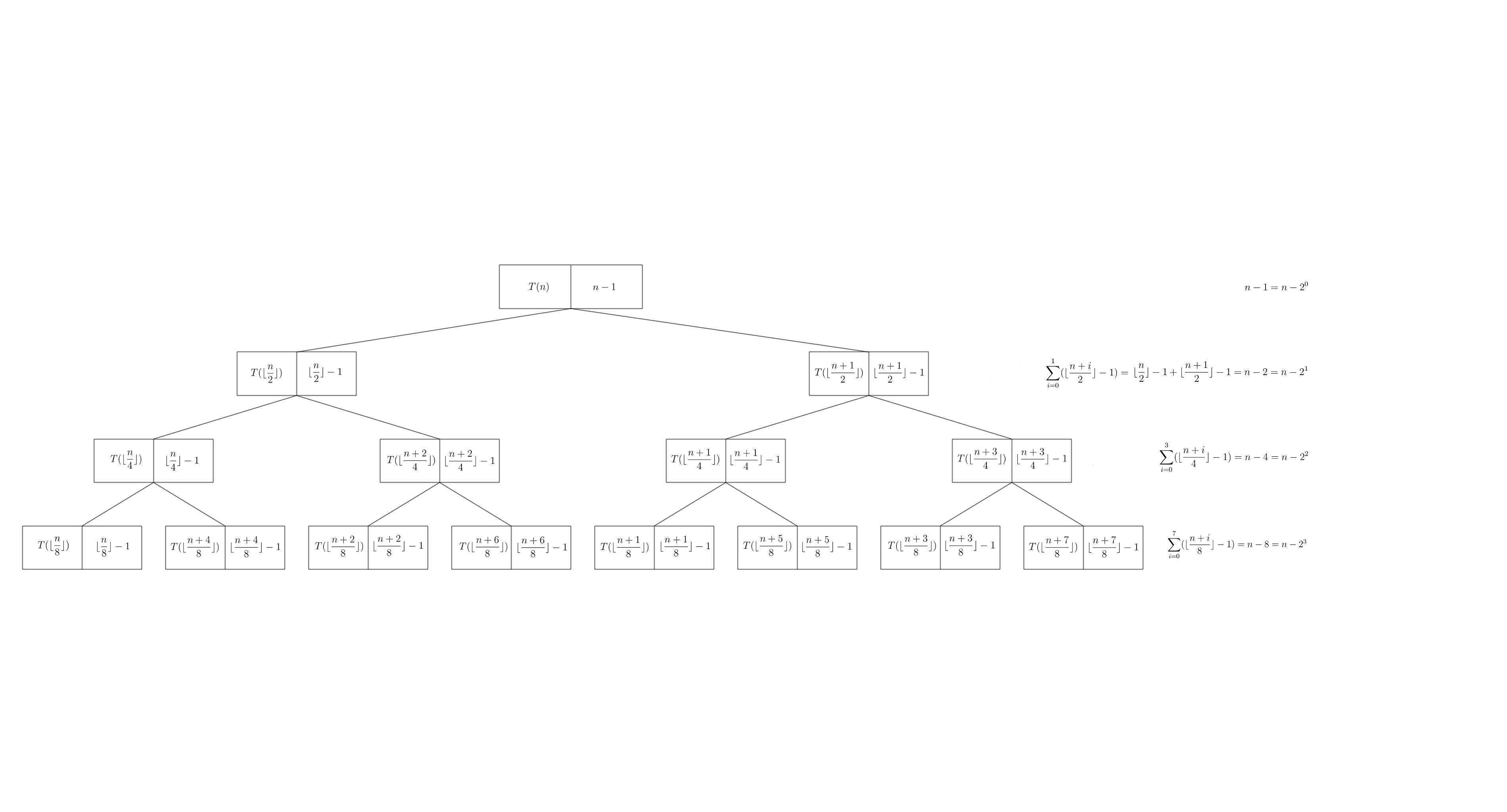}

In Spring 2010 and before, I was deriving the equality \eqref{eq:main_level_comps_formula} on page~\pageref{eq:main_level_comps_formula} during my lectures directly from the recurrence relation \eqref{eq:recmergesort1}, \eqref{eq:recmergesort2} on page~\pageref{eq:recmergesort1}.

\bibliographystyle{siam}
\bibliography{ref}

\begin{thebibliography}{1}

\bibitem{baa:ana}
{\sc S.~Baase}, {\em Computer Algorithms: Introduction to Design and Analysis},
  Addison-Wesley Publishing, 2nd~ed., 1991.

\bibitem{baavan:ana}
{\sc S.~Baase and A.~V. Gelder}, {\em Computer Algorithms; Introduction to
  Design \& Analysis}, Asddison Wesley, 3rd~ed., 2000.

\bibitem{knu:art}
{\sc D.~E. Knuth}, {\em The Art of Computer Programming}, vol.~3,
  Addison-Wesley Publishing, 2nd~ed., 1997.

\bibitem{sedfla:ana}
{\sc R.~Sedgewick and P.~Flajolet}, {\em An Introduction to the Analysis of
  Algorithms}, Pearson, 2013.

\bibitem{stein:matsnap}
{\sc H.~Steinhaus}, {\em Mathematical Snapshots}, Oxford University Press,
  1950.

\bibitem{stoyao:optmerge}
{\sc P.~K. Stockmeyer and F.~F. Yao}, {\em On the optimality of linear merge},
  SIAM Journal on Computing, 9 (1980), p.~85–90.

\bibitem{suc:wcheap}
{\sc M.~A. Suchenek}, {\em A complete worst-case analysis of heapsort with
  experimental verifcation of its results ({MS})},
  \verb+http://arxiv.org/abs/1504.01459+,  (2015).

\bibitem{suc:bestmerg}
\leavevmode\vrule height 2pt depth -1.6pt width 23pt, {\em Best-case analysis
  of {MergeSort} with an application to the sum of digits problem ({MS})},
  \verb+http://arxiv.org/abs/1607.04604+,  (2016).

\bibitem{suc:worstmergeMS}
\leavevmode\vrule height 2pt depth -1.6pt width 23pt, {\em Elementary yet
  precise worst-case analysis of {M}erge{S}ort ({MS})},
  \verb+http://csc.csudh.edu/suchenek/Papers/Analysis_of_MergeSort.pdf+,
  (2017).

\end{thebibliography}
\bigskip
\bigskip
\copyright \textit{2017 Marek A. Suchenek. All rights reserved by the author. \newline A non-exclusive license to distribute this article is granted to arXiv.org}.

\end{document}